\newcommand{\RR}{\mathbb{R}}
\newcommand{\hessian}{H}
\newcommand{\onevec}{1_{N}}
\newcommand{\wvec}{w}
\newcommand{\designmatrix}{\mathbf{X}}
\newcommand{\floor}[1]{\left\lfloor #1\right\rfloor}
\newcommand{\Ai}{\bm{A}_{-1}}
\newcommand{\Bi}{b_{-1}}
\newcommand{\iidsim}{\stackrel{\text{iid}}{\sim}}
\DeclareMathOperator*{\argmin}{arg\,min}
\newcommand{\revision}[1]{\textcolor{black}{#1}}
\def\eqref#1{equation~\ref{#1}}
\def\floor#1{\lfloor #1 \rfloor}
\def\1{\bm{1}}
\DeclareMathAlphabet{\mathsfit}{\encodingdefault}{\sfdefault}{m}{sl}
\SetMathAlphabet{\mathsfit}{bold}{\encodingdefault}{\sfdefault}{bx}{n}
\theoremstyle{plain}
\newtheorem{theorem}{Theorem}[section]
\newtheorem{proposition}[theorem]{Proposition}
\newtheorem{lemma}[theorem]{Lemma}
\theoremstyle{definition}
\newtheorem{definition}[theorem]{Definition}
\theoremstyle{remark}
\newtheorem{example}{Example}
\title{Approximations to worst-case data dropping: \\
unmasking failure modes}
\author{
  Jenny Y. Huang\textsuperscript{1,2}, 
  David R. Burt\textsuperscript{1,2},
  Yunyi Shen\textsuperscript{1,2},\\
  Tin D. Nguyen\textsuperscript{1,2}, and
  Tamara Broderick\textsuperscript{1,2} \\
  \textsuperscript{1}Department of Electrical Engineering
    and Computer Science\\ Massachusetts Institute of Technology
    \\
    \textsuperscript{2}MIT-IBM Watson AI Lab
}
\date{}
\begin{document}
\maketitle

\begin{abstract} A data analyst might worry about generalization if dropping a very small fraction of data points from a study could change its substantive conclusions. Checking this non-robustness directly poses a combinatorial optimization problem and is intractable even for simple models and moderate data sizes.
Recently various authors have proposed a diverse set of approximations to detect this non-robustness. In the present work, we show that, even in a setting as simple as ordinary least squares (OLS) linear regression, many of these approximations can fail to detect (true) non-robustness in realistic data arrangements. We focus on OLS in the present work due its widespread use and since some approximations work only for OLS. \revision{Across our synthetic and real-world data sets, we find that a simple recursive greedy algorithm is the sole algorithm that does not fail any of our tests and also that it can be orders of magnitude faster to run than some competitors.}



\end{abstract}

\bigskip \textit{Keywords}: sensitivity analysis; generalization; influence function; masking

\newpage 

\section{Introduction}
\label{sec:introduction}
Researchers typically run a data analysis with the goal of applying any conclusions in the future. For instance, economists run randomized controlled trials (RCTs) of microcredit with a particular set of people at a particular time. If the resulting data analysis shows that microcredit increases business profit, a policymaker might distribute microcredit to people in the future, on the assumption that microcredit will help these people too. We might worry whether this assumption is warranted if we could drop a very small fraction of people from the original trial and instead conclude that microcredit decreases business profit. As a concrete example, \citet{broderick2023automatic} show that it is possible to drop 15 households out of over 16,500 in an influential microcredit RCT and change the result to a statistically significant conclusion of the opposite sign.

In many cases, then, it behooves us to check: can we find a small fraction of data that, if dropped, would change the conclusion of the analysis? A brute force approach to answering this question enumerates every possible small data subset and re-runs the analysis a combinatorially large number of times. E.g., suppose we might be concerned if dropping 0.1\% of our data could change our conclusions. Running $\binom{16500}{16}$ 1-second-long data analyses would take over $10^{46}$ years. 


Given these computational challenges, researchers have suggested various approximations instead.
\citet{broderick2023automatic} suggest using an approximation based on instantiating continuous weights on the data points and differentiating with respect to these weights. The authors use this approximation to identify small data subsets that, when dropped, change conclusions in multiple landmark papers in economics \citep[e.g.][]{angelucci2009indirect,finkelstein2012oregon}. In follow-up work focused on OLS, \citet{kuschnig2021hidden} introduced two additional ideas for finding the worst-case data subset: (1) approximating the impact of removing a group of points by the sum of the impacts of exactly removing individual data points and (2) greedily removing one data point at a time. 
\citet{moitra2022provably,freund2023towards} provide additional approximations that are specific to OLS.

Recently, scientists and social scientists have used some of these approximations to assess the robustness of important findings in econometrics \citep{martinez2022much}, epidemiology \citep{di2022covid}, and the social sciences \citep{davies2024training,burton2023negative}. Given the deployment of these approximations in practice, we ask when and how they can fail in realistic data analyses---to alert practitioners and motivate further approximation development. Previous works have identified particular instances of failure modes, without a comparison of failures across approximation methods \citep{broderick2023automatic,nguyen2024mcmc}. Other works have illustrated failure modes in adversarial constructions or settings where a large fraction of the data ($\geq 1\%$) needed to be removed \citep{moitra2022provably,freund2023towards,kuschnig2021hidden}. 


In the present work, we systematically explore 
whether approximations can detect if there exists a very small fraction ($< 1\%$) of data that, if dropped, can change conclusions. In other words, we ask whether approximations can detect a particular form of non-robustness in a data analysis.
We focus on natural data settings with no adversary. In order to include the approximations of \citet{moitra2022provably,freund2023towards} in our comparison, we focus on linear regression fit with OLS. Before the present work and contemporaneous work by \citet{hu2024most}, there had not been studies systematically characterizing non-adversarial failure modes of approximations for this form of robustness or studies comparing the prominent existing approximations.

In many aspects, \citet{hu2024most} and our present work are complementary. While \citet{hu2024most} focus on exact recovery of the most influential data subset with cardinality at most equal to a stated value, we focus on finding whether there exists a small subset of data that, if dropped, could change substantive conclusions. Given our different focuses, \citet{hu2024most} finds it useful to separate masking into two phenomena: amplification and cancellation. Meanwhile, we find it useful to point out failure modes due to poor conditioning of the bulk of data. \citet{hu2024most}’s theory assumes a particular data-generating process [their Equation 7]; while we don’t make this assumption, we instead need to take a limit of an outlier data point’s position to derive our results (\cref{prop:oneoutliertypetwo}).

Both \citet{hu2024most} and our work focus on OLS for theory and illustration of failure modes. While linear regression is less common in engineering disciplines, \citet{castro2024use} demonstrate that, as recently as 2022, (often well) over half of all papers reporting any methods in Medical and Health Sciences, Agricultural Sciences, Social Sciences, and the Humanities use linear regression. Indeed, most of the applied papers cited in the discussion above use linear regression \citep{angelucci2009indirect,finkelstein2012oregon,martinez2022much,davies2024training,burton2023negative}. We suspect OLS is the most common form of linear regression used in practice.

\revision{In the present work, we identify failures to detect (true) non-robustness in approximations from \citet{broderick2023automatic}, \citet{kuschnig2021hidden}, \citet{moitra2022provably}, and \cite{freund2023towards}.} We are able to identify failures even in linear regression with a single covariate and an intercept term. In contrast to the two works most similar to our own \citep{kuschnig2021hidden,hu2024most}, we compare to \revision{three} new OLS-specific approximations developed by \citet{moitra2022provably,freund2023towards}, and we present a theoretical runtime analysis along with an empirical runtime comparison of all methods presented. Importantly, we present new, targeted illustrations of failure modes, each aimed at revealing a specific factor contributing to failure (e.g., a point with extremely high leverage and low residual, a small clump of points far away from the rest of the data).

\revision{Across worst-case data-dropping approximations, we conclude that a simple greedy algorithm (suggested by \citet{kuschnig2021hidden}) does not fail our accuracy tests (both on synthetic and real-world data)}, is conceptually straightforward, and can offer orders of magnitude savings in running time over the OLS-specific mathematical programming alternatives.

\section{Setup}
\label{sec:setup}
We first establish notation for OLS analysis paired with worst-case data dropping.
The approximations of \citet{moitra2022provably,freund2023towards} require that the data analysis be linear regression fit with OLS, and moreover that the data-analysis conclusion be made from the sign of a regression coefficient. To include these methods in our comparison, we focus on this case.

In particular, let $N$ be the number of data points. We write the data as $d_{1:N} := \{d_{n}\}_{n=1}^{N}$, where $d_n := (x_n, y_n)$ consists of covariates in a column-vector $x_n \in \mathbb{R}^P$ and scalar response $y_n \in \mathbb{R}$.
OLS estimates an unknown column-vector parameter $\theta \in \mathbb{R}^P$ by minimizing a sum of squared losses to a linear trend: $\hat{\theta} = \argmin_{\theta} \sum_{n=1}^{N} (y_n - \theta^\top x_n)^2$. We will often (but not always) include an intercept term, in which case we think of the $P$th covariate as an all-ones covariate.

\revision{We focus on conclusions made based on the direction, or \textit{sign}, of an estimated effect $\hat{\theta}_p$.  Such a sign often guides interpretation and decision-making in fields such as biomedicine or economics. For example, in “Contradicted and Initially Stronger Effects in Highly Cited Clinical Research,” \citet{ioannidis2005contradicted} highlights prominent medical studies in which the direction of an estimated effect was later reversed by subsequent research; that is, the reversal represented a different conclusion about a clinical treatment from the original study. Similarly, in \textit{Mostly Harmless Econometrics}, \citet[Section 4.1.2]{angrist2009mostly} demonstrate the importance of the direction of causal effects in shaping policy-relevant inferences.}

We might be concerned if dropping a small fraction $\alpha \in (0,1)$ of our data changed our substantive conclusions. The value of $\alpha$ is user-defined. We follow \citet{broderick2023automatic} and use $\alpha=0.01$ (i.e., 1\% of the data) as a default.
\citet{broderick2023automatic} define the \textit{Maximum Influence Perturbation} as the largest possible change induced in some quantity of interest by dropping at most $100\alpha\%$ of the data. Since we presently assume conclusions are made from the sign of $\theta_p$, our quantity of interest will always be $\theta_p$. 
Without loss of generality, we assume $\hat{\theta}_p > 0$, and we ask whether we can change the result to a negative sign.

To write the optimization problem implied by the Maximum Influence Perturbation (\cref{eqn:combinatorial-optimization} below), let $w_n$ represent a weight on the $n$th data point, and collect a vector of data weights, $\wvec := (w_1, ..., w_N)$.
Define $\hat{\theta}(\wvec) \coloneqq \argmin_\theta \sum_{n=1}^N w_n (y_n - \theta^\top x_n)^2$. Setting $\wvec = \onevec$, the all-ones vector of length $N$, recovers the original data analysis, and setting $w_n$ to zero corresponds to dropping the $n$th point. We collect all weightings that correspond to dropping at most $100\alpha\%$ of the data in 
$W_\alpha \coloneqq \{ \wvec \in \{0, 1\}^N : \sum_{n=1}^N (1 - w_n) \leq \alpha N \}$. Finally, the Maximum Influence Perturbation for this particular OLS effect-size quantity of interest\footnote{See \citet{broderick2023automatic} for a more general definition, including other data analyses and other quantities of interest.} can be written
\begin{align}
\max_{\wvec \in W_{\alpha}} \left(\hat{\theta}_p(\onevec) - \hat{\theta}_p(\wvec)\right).
\label{eqn:combinatorial-optimization}
\end{align}
The \emph{Most Influential Set} is defined to be the set of dropped data corresponding to the maximizing $\wvec$ value.

In principle, one might solve \cref{eqn:combinatorial-optimization} by computing $\hat{\theta}_p(\wvec)$ for each of the $\smash{\binom{N}{\floor{\alpha N}}}$ values of $W_{\alpha}$.
As detailed in \cref{sec:introduction}, this brute force approach can be computationally prohibitive even for moderate $N$.


\section{Approximations}
\label{sec:currentmethods}
We next review various approximations to the solution of \cref{eqn:combinatorial-optimization} that are available from the literature. 
While many authors have considered approximating dropping a pre-defined (single) subset of data from an analysis, we here focus on dropping the worst-case subset of data as in \cref{eqn:combinatorial-optimization}; see \cref{sec:case-influence} for further discussion of this distinction and related work. We also provide a systematic comparison of theoretical running time costs; an empirical comparison of costs appears in our experiments.

\subsection{Additive approximations}
\label{sec:additive-methods}
We start with what we call \emph{additive approximations}. In particular, additive approximations (a) approximate the impact (to a quantity of interest) due to dropping a single data point and (b) add up the individual impacts to approximate the impact of dropping a group of data points. 

\textbf{Approximate Maximum Influence Perturbation (AMIP).}
\citet{broderick2023automatic} propose relaxing $\wvec$ to allow continuous values and replacing the $\wvec$-specific quantity of interest with a first-order Taylor series expansion with respect to $\wvec$ around $\onevec$. This approximation applies to more general data analyses and quantities of interest. In our case (cf. \cref{sec:amip-derivation}), this approximation amounts to replacing \cref{eqn:combinatorial-optimization} with
\begin{align}
    \max_{\wvec \in W_{\alpha}} \sum\nolimits_{n=1}^N (1 - w_n) \frac{\partial \hat{\theta}_p(\wvec)}{\partial w_n}\Big\vert_{\wvec = 1_N}.
    \label{eq:linear-optimization}
\end{align}
Let $e_p$ denote the $p$th standard basis vector and $\designmatrix \in \mathbb{R}^{N \times P}$ denote the design matrix, \revision{where $N > P$ and we assume $\designmatrix$ is full rank.} For OLS with an effect-size quantity of interest, $\theta_p$, the formula for the influence score of the $n$th data point is a product of a leverage-like term and a residual term,
\begin{align}
    \frac{\partial \hat{\theta}_p(\wvec)}{\partial w_n}\Big\vert_{\wvec = 1_N} = 
    \underbrace{e_p^\top(\designmatrix^\top \designmatrix)^{-1}x_n}_{\text{leverage-like term}}\underbrace{(y_n - \hat{\theta}(\onevec)^\top x_n)}_{\text{residual term}}.
    \label{eqn:influence-function-formula}
\end{align}
\revision{For a derivation of \cref{eqn:influence-function-formula}, see \cref{eqn:derivation-ols-if} in \cref{sec:amip-derivation}.}

For the quantity of interest $\theta_p$ then,
the AMIP approximation replaces \cref{eqn:combinatorial-optimization} with an optimization problem that can be solved by maximizing a sum of influence scores:
\begin{align}
    \max_{\wvec \in W_{\alpha}} \sum\nolimits_{n=1}^N (1 - w_n) e_p^\top(\designmatrix^\top \designmatrix)^{-1}x_n(y_n - \hat{\theta}(\onevec)^\top x_n).
    \label{eq:linear-optimization-ols}
\end{align}
The AMIP algorithm solves \Cref{eq:linear-optimization-ols} by (a) running the original data analysis, (b) computing the \emph{influence scores} (\cref{eqn:influence-function-formula}), \revision{(c) finding the largest $\floor{\alpha N}$ values}, and (d) adding up the influence scores to approximate the impact of dropping the group. The approximate Most Influential Set returned here is precisely the set of points with the largest $\floor{\alpha N}$ influence scores. 
The overall cost of running AMIP for a general data analysis is $O(Analysis + \revision{N \log (\alpha N)} + NP^2 + P^3)$\footnote{\revision{The floor function introduces discrete rounding effects that are negligible in the asymptotic regime; see \cref{sec:floor-function} for more details.}}, where $Analysis$ represents the cost of the data analysis. The cost of running OLS is $O(NP^2 + P^3)$. So, for OLS with an effect-size quantity of interest, the cost of running AMIP is \revision{$O(NP^2 + P^3 + N \log(\alpha N))$}.

\textbf{Additive One-Exact.}
\citet{kuschnig2021hidden} approximate the change in effect size that results from dropping a group of data points in OLS by the sum of the impacts of dropping individual points; the idea may be applied more broadly, for more general losses or quantities of interest. We call this approach \emph{Additive One-Exact}. The broad idea is to (a) compute the exact impact of dropping each single data point on the quantity of interest, (b) find the $\floor{\alpha N}$ data points that, when dropped individually, yield the changes of largest magnitude in the desired direction, and (c) add up those individual impacts to approximate the impact of dropping the group. Here, the approximate Most Influential Set returned is the set \revision{of} $\floor{\alpha N}$ points that, when dropped individually, yield the changes of largest magnitude in the desired direction. For general losses, Additive One-Exact can cost $N$ times the cost of a single data analysis and need not be exact for $\floor{\alpha N}>1$. In the special case of OLS with an effect-size quantity of interest, Additive One-Exact requires just a single data analysis but still need not be exact for $\floor{\alpha N}>1$. 
When we simultaneously consider (a) OLS linear regression and (b) the effect-size quantity of interest, the Additive One-Exact approximation can be seen as equivalent to an additive version of another popular approximation that, in the recent machine learning literature, has been used in estimating the impact of dropping known subsets of data \citep{beirami2017optimal,sekhari2021remember,koh2019accuracy,ghosh2020approximate}. We define this approximation,  which we call the Additive One-step Newton approximation, in \cref{sec:additive-one-step-newton}. We hope that this Additive One-step Newton approximation provides another angle on extending the Additive One-Exact algorithm to models beyond OLS, to the more general class of differentiable losses.

In the general data analysis setting, the computation of One-Exact scores involves re-running the data analysis upon dropping each individual point in a data set, a cost that is $O(N \times Analysis)$. In the setting of OLS, we can take advantage of the One-step Newton update in place of re-running the analysis $N$ times (see \cref{sec:additive-one-step-newton} for more details). Using this rank-one update, the cost of computing One-Exact scores for $N$ data points becomes $O(NP^3 + P^3)$, or simply $O(NP^3)$.  Notice the additional $P$ factor relative to the  $O(NP^2)$ term in the AMIP computation; the improved precision of One-Exact scores over influence scores comes at the cost of this additional factor of $P$. Specifically, for Additive One-Exact, the Hessian matrix is reweighted to account for each dropped data point (see \cref{eqn:Add-1sN} in \cref{sec:additive-one-step-newton} for the equation for this approximation) while, for AMIP, this reweighting is omitted.

Thus, the general cost of running Additive One-Exact is $O(N \times Analysis + \revision{N \log (\alpha N)})$, and the cost specific to OLS with an effect size quantity of interest is $O(NP^3 + \revision{N \log (\alpha N)}))$. See \cref{sec:time-complexity-supplementals} for more details.

\subsection{Greedy approximations}
\label{sec:greedy-methods}
Next we discuss \emph{greedy approximations.} Greedy approximations iteratively \revision{(a) approximate the change (to the quantity of interest) upon dropping each data point individually, (b) select the point that results in the biggest approximated change when dropped, and (c) re-run the data analysis without this point \citep[Section 2.1]{belsley2005regression}.}

\textbf{Greedy One-Exact.}
The outlier detection literature has highlighted the combinatorial cost of finding influential subsets exactly. This literature also describes the \emph{masking problem} that can arise in additive approximations of influence: namely, when one outlier hides the impact of another \citep{belsley2005regression,atkinson1986masking}. To address these issues, \citet[Section 2.1]{belsley2005regression} 
suggest to greedily remove one outlier point at a time in a stepwise procedure. \citet{kuschnig2021hidden} propose a similar greedy procedure for approximating the Maximum Influence Perturbation; namely, they iteratively:
\revision{(a) compute the exact change (to the quantity of interest) upon dropping each data point individually, (b) select the point that results in the biggest change when dropped, and (c) re-run the data analysis \citep[Section 2.1]{belsley2005regression}.}
In general, Greedy One-Exact requires  $\floor{\alpha N}$ times the cost of an additive approximation. So in the special case of OLS with an effect-size quantity of interest, Greedy One-Exact costs only $\floor{\alpha N}$ data analyses, which is not prohibitive in our examples below. 

For a general data analysis, Greedy One-Exact involves running $N$ re-runs of a data analysis for $\floor{\alpha N}$ iterations. Thus, the overall cost of running Greedy One-Exact is $O(\alpha N^2 \times Analysis)$.
\revision{In the OLS-specific setting, we can again take advantage of the rank-one update as described for Additive One-Exact. Specifically, to compute one-step Newton scores for $N$ data points costs $O(NP^3)$; repeated over $\floor{\alpha N}$ iterations, step (a) costs $O(\alpha N^2 P^3)$. To find the top One-Exact score $\floor{\alpha N}$ times, step (b) costs $O(\alpha N^2)$. One run of OLS costs $O(NP^2 + P^3)$; performed over $\floor{\alpha N}$ iterations, step (c) costs $O(\alpha N P^3)$.}

Thus, for OLS with an effect size quantity of interest, the cost reduces to \revision{$O(\alpha N^2 P^3)$}. See \cref{sec:time-complexity-supplementals} for the more detailed description of these results.

\textbf{Greedy AMIP.} We define Greedy AMIP analogously to Greedy One-Exact, replacing the exact effect of removing a point with the influence score approximation. To the best of our knowledge, Greedy AMIP has not previously been proposed in the literature, for any data analysis or quantity of interest. 

For a general data analysis, running Greedy AMIP costs $\floor{\alpha N}$ times the cost of running AMIP. \revision{This cost is $O(\alpha N \times Analysis + \alpha N^2 P^2 + \alpha N P^3)$.} Substituting in the cost of OLS, we find that the cost of running Greedy AMIP for OLS with an effect size quantity of interest is \revision{$O(\alpha N^2 P^2 + \alpha N P^3)$}. See \cref{sec:time-complexity-supplementals} for the more detailed description of these results.

\subsection{Approximations specific to Ordinary Least Squares with effect-size quantity of interest}
\label{sec:ols-methods}
A line of recent works provide mathematical programs that give upper bounds on the size of the Most Influential Set for settings of OLS where the quantity of interest is an effect size \citep{moitra2022provably,freund2023towards}. Unlike the additive and greedy approximations, these algorithms do not directly output the approximation of the Maximum Influence Perturbation or the Most Influential Set. Both quantities, however, can be easily obtained from the algorithms, as we describe next. 

\textbf{NetApprox.} 
The NetApprox algorithm of \citet{moitra2022provably} seeks to find the size of the smallest data subset that, if removed or down-weighted (this algorithm works with fractional data weights), would zero out the sign of a particular regression coefficient.

In order to use the output of NetApprox for the Maximum Influence Perturbation task, we first obtain the fractional data-weights, $\wvec := (w_1, ..., w_N)$ where $w_n \in [0, 1]$, computed using NetApprox.\footnote{The original NetApprox algorithm computes, but does not return, the data weights.} \revision{We set the Approximate Most Influential Set to be the set of $\floor{\alpha N}$ points that have the smallest weights (i.e., weights closest to $0$) and such that those weights are strictly less than $1$.\footnote{\revision{In all of our experiments, NetApprox returned at least $\floor{\alpha N}$ points with weight strictly less than $1$, so the question of how to handle weights equal to $1$ did not arise in practice.}} We then drop the points in the approximated Most Influential Set and refit the model to find the approximated Maximum Influence Perturbation (i.e., the maximum change in the effect size that can be induced by dropping a subset of at most $\floor{\alpha N}$ data points).}

In order to make computations tractable, NetApprox works by strategically selecting a “net,” a finite number of coefficient vector configurations. For every chosen configuration, the algorithm solves a linear program to determine the minimum number of samples that need to be removed in order to zero out the first regression coefficient. Running NetApprox involves solving $ O(P^{P/2})$ linear programs; altogether, the runtime for this algorithm is $O(P^{P/2} \cdot \text{poly}(N))$.

\textbf{FH-Gurobi.} \citet{freund2023towards} provide their own implementations of the mathematical program introduced in \citet{moitra2022provably}. In particular, the authors implement two versions of this mathematical program\revision{: (1)} a fractionally-relaxed version, where data weights can take values between 0 and 1 (inclusive), and \revision{(2)} an integer-constrained version, where data weights are forced to take on the integers 0 or 1. Both versions are solved using exact solver methods supported from Gurobi 9.0 onwards (specifically, the authors note that these methods apply a globally optimal spatial branch-and-bound method that recursively partitions the feasible region into subdomains). The paper refers to these two mathematical programs as Gurobi, for the optimization software they were implemented in. For ease of distinguishing this approximation from the commercial optimization software it was implemented in, we refer to the approximation as FH-Gurobi. \citet{freund2023towards} found that the integer-constrained version of FH-Gurobi showed substantially worse performance for their task, so the authors recommend running it with a warm start from the rounded weights obtained using the fractionally-relaxed version. 

\revision{In the experiments that follow, we compare against two versions of the algorithm \citet{freund2023towards} proposed for the Maximum Influence Perturbation problem. In the first version, which we call FH-Gurobi, we run the integer-constrained mathematical program and deem the approximate Most Influential Set as the set of indices with weight $0$. If the size of this set is no greater than $\floor{\alpha N}$, we refit OLS upon dropping these points and deem the difference between the original fit and the new fit to be the approximated Maximum Influence Perturbation (i.e., the maximum change in the effect size that can be induced by dropping a subset of at most $\floor{\alpha N}$ data points). If the algorithm sets more than $\floor{\alpha N}$ data weights to $0$, then we conclude that there does not exist such a subset that can be dropped to change conclusions (i.e., the data analysis is robust). In the second version, which we call FH-Gurobi (warm-start), we first run the fractionally-relaxed mathematical program, then use those outputs (i.e., the fractional weights) as input to the integer-constrained mathematical program. We then determine the approximate Most Influential Set and Maximum Influence Perturbation in the same way as we do for the integer-constrained version.}

\subsection{Lower bound algorithms} Just as there are algorithms that provide upper bounds on the size of the Most Influential Set, there are algorithms that provide lower bounds \citep{moitra2022provably,freund2023towards, rubinstein2024robustness}. However, we are not able to easily determine a Most Influential Set using these algorithms, so we do not compare to these in the sections below. For more details on these algorithms, see \cref{sec:robustness-auditing}.

\section{Failure Modes}
\label{sec:fm}
We start by defining what failure means in the present context, and then we show a range of experiments demonstrating failures for some of the approximations above.

\subsection{What failure means here}
We consider the data analyst interested in whether their analysis is robust to dropping a small fraction of data. With that in mind, we say that an approximation fails if there truly exists a small fraction of data that we can drop to change the conclusions of the analysis, but the approximation reports that such a data subset does not exist. For OLS, we expect that analysts are willing to re-run their analysis at least once (after any of the approximations defined above) with the approximate Most Influential Set dropped, so we are most concerned about the following type of failure.
\begin{definition}
We say there is a \emph{failure with re-run} if (a) there exists a small fraction of data that we can drop to change conclusions and (b) we remove the points suggested by the method and re-run the analysis, but we do not see an actual change in conclusions upon re-running.
\end{definition}
\revision{The greedy approximations (\cref{sec:greedy-methods}) and NetApprox (\cref{sec:ols-methods}) already require the analyst to re-run their analysis with the suggested points dropped. Both FH-Gurobi algorithms also effectively require re-runs; see \cref{sec:fh-gurobi-failure-with-rerun} for a discussion of some subtleties. The additive approximations (\cref{sec:additive-methods}), however, do not inherently require re-runs of the data analysis.} 
Thus, the additive approximations introduce the potential for an additional type of failure, which we define next.
\begin{definition} We say there is a \emph{failure without re-run} if there exists a small fraction of data that we can drop to change conclusions, but the approximation reports that such a data subset does not exist.
\end{definition}


We contrast these notions of failure specific to the problem of data-dropping robustness with alternative notions. For instance, 
\citet{moitra2022provably,freund2023towards,rubinstein2024robustness,hu2024most} are concerned with exact recovery of the set of data points that, if dropped, change the quantity of interest by the largest amount. However, we note that even if 2 data points out of 10,000 can be dropped to change conclusions, a practitioner might be similarly worried to hear that 3 data points can be dropped to change conclusions. \citet{moitra2022provably,freund2023towards,rubinstein2024robustness,hu2024most} 
are also concerned with larger $\alpha$ fractions, but we focus on $\alpha \le 0.01$ since small $\alpha$ fractions are the concern of the present form of robustness (see \cref{sec:setup}). Finally \citet{moitra2022provably} 
are concerned with adversarially constructed data configurations, but we focus on data configurations that could arise naturally in practice.

Our failure modes focus on \emph{under}estimation of sensitivity; if the data analyst is willing to re-run their analysis once, any non-robustness found from that re-run is conclusive. So in this case, \emph{over}estimation \revision{(i.e., a false positive, where the method detects non-robustness when it should not)} is not a concern.
\begin{figure*}[t]
    \centering
    \includegraphics[width=\textwidth, trim={0 0cm 0 0cm}, clip]{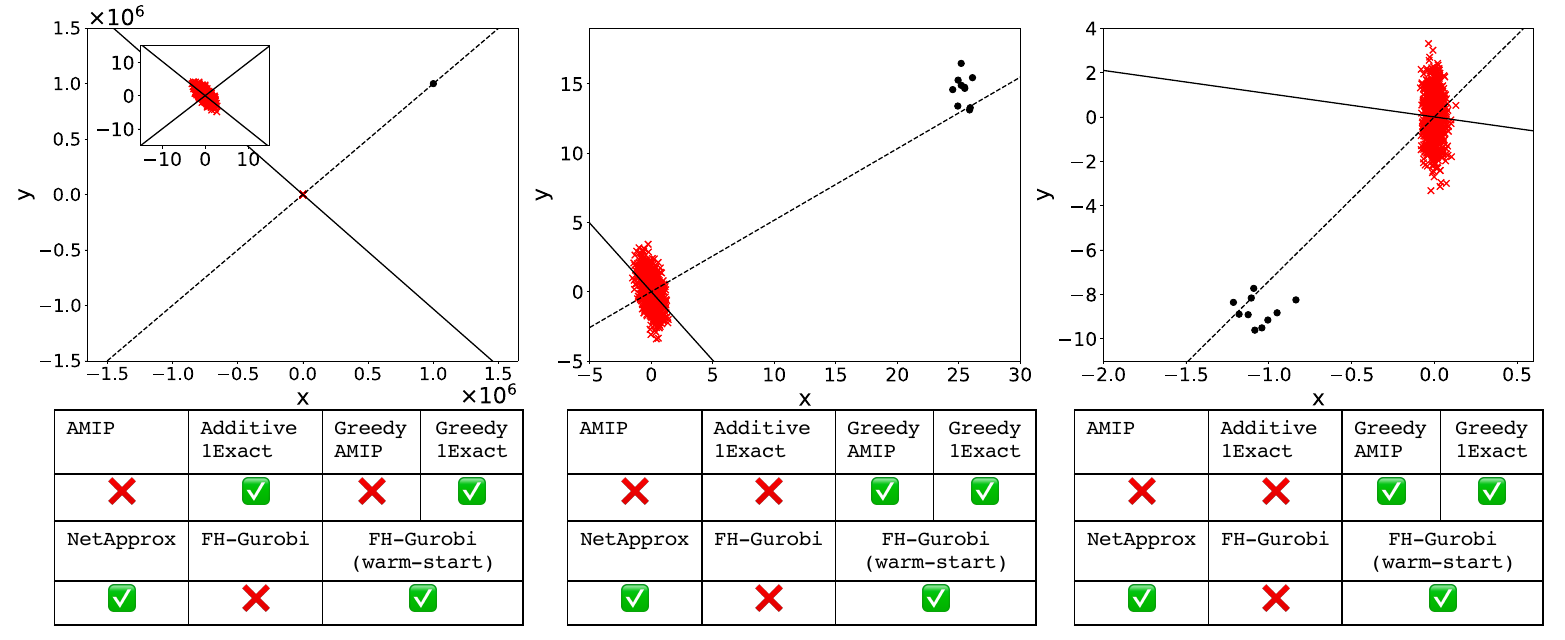}
    \caption{Our examples: one-outlier (left), Simpson's paradox (middle), poor conditioning (right). A dashed line represents the OLS-estimated slope on the entire data set while a solid line represents the slope with black dots removed. The left plot includes an inset zoomed in on the bulk of the data. The tables display the performance of the approximations for the corresponding example: a red X indicates a failure with re-run, and a green check indicates a success.}
    \label{fig:plots-of-examples}
\end{figure*}
\subsection{One outlier}
\label{sec:one-outlier-examples}

We start by considering a case with a single data point far from the bulk of the data. We find experimental failures in AMIP and FH-Gurobi (without warm start) and support our findings with intuition from theory.

\subsubsection{One outlier experiment}

\textbf{Setup.}
In realistic data settings, we may have a single data point far from the bulk of the data; this outlier may arise due to data-entry errors, machine-measurement errors, or heavy tails in both the covariate and response.
To construct the plot in \cref{fig:plots-of-examples} (left), we draw 1,000 red crosses by taking $x_n \sim \mathcal{N}(0,1)$ i.i.d.\ and $y_{n} = -x_{n} + \epsilon_{n}$ with $\epsilon_{n} \sim\mathcal{N}(0,1)$ i.i.d. Throughout, we use $\mathcal{N}(\mu,\sigma^2)$ to denote the normal distribution with mean $\mu$ and variance $\sigma^2$. The black dot appears at $x_{n} = y_{n} = 10^6$. We fit OLS with an intercept. The OLS-estimated slope on the full data set is nearly 1; after dropping the black point (less than 0.1\% of the data set), the estimate is nearly -1, representing a sign change.


\textbf{Experimental Results.}
We summarize the performance with re-run in the left table of \cref{fig:plots-of-examples}. \revision{When asked to find the worst-case $0.1\%$ of the data set to drop, Additive One-Exact succeeds because, by construction, it is exact for removing a single point.}
AMIP chooses a red cross to drop; it predicts that no sign change will be achieved, and dropping the chosen point and re-running also does not achieve a sign change. It follows that AMIP suffers both types of failure (i.e., with and without re-run) in this example. 
\revision{When considering failure with re-run, there} is no distinction between greedy and additive algorithms when $\floor{ \alpha N} = 1$, so Greedy One-Exact and Greedy AMIP perform the same as their additive counterparts.
FH-Gurobi (without warm-start) fails while both NetApprox and FH-Gurobi (warm-start) succeed in this setting.
In this experiment, failures with and without re-run align for each approximation.
We provide more detail on performance in \cref{table:one-outlier-table} in \cref{sec:fm-tables}.

The leverage of the black-dot point must be near one in order to construct this failure mode; that is, the black dot must account for most of the variance in the covariance matrix. However, we note that the exact alignment of the black dot on the line $y_{n} = x_{n}$ is incidental. In our theory next, we prove that we can expect such failures when the black-dot point is chosen with $y_n = c x_n$ for any constant $c>0$ and sufficiently large $|x_n|$. We also note that, although \citet{kuschnig2021hidden} has observed that high-leverage observations can lead to problems for the AMIP, they do not demonstrate this result in a setting with one outlier \revision{or provide supporting theory}.



\subsubsection{One outlier theory}

Our theory illustrates why we might expect AMIP to fail in this one-outlier case. In particular, we first demonstrate why a point far from the bulk of the data might have a low influence score. Second, we demonstrate why we might expect other data points to have higher influence scores. Together, these facts suggest why the outlier point might have a lower influence score than central points and thereby not be chosen for dropping in the approximation, potentially leading to a failure with re-run.

We do not need to assume any data-generating process to prove our results. Rather, we take a single point increasingly far away from the origin. 
First, observe that, in linear regression, the influence score factorizes into two terms: the residual times a leverage-like term \citep{hampel1974influence}; see \cref{sec:one-outlier-example} and \cref{eqn:influence-function-formula} for full details.
A sufficiently-far outlier will have a very low residual. Meanwhile we show that the leverage-like term goes to zero as the outlier gets farther away (see \cref{lemma:influence-score-outlier}). So a sufficiently-far outlier will have a vanishingly low influence score. We let this outlier be the first data point in the result in \cref{prop:oneoutliertypetwo} below. If this outlier caused all the other influence scores to become vanishingly low as well, it might still have the largest influence. We show this collective vanishing behavior need not happen in \cref{cor:non-zero-influence-influence-score} below; in particular a second data point's influence score does not vanish.

Before stating our results, we introduce some notation relating to fitting OLS on all data points except the first point, which we will take to be the outlier. Let $\designmatrix_{-1} \in \mathbb{R}^{(N - 1) \times P}$ be the design matrix with the first row deleted, and let $y_{-1} \in \mathbb{R}^{N - 1}$ be the response vector with the first entry deleted. Define $\bm{A}_{-1} := \designmatrix_{-1}^\top \designmatrix_{-1}$ and $b_{-1} := y_{-1}^\top \designmatrix_{-1}$.
\begin{restatable}{proposition}{oneoutliertypetwo}
\label{prop:oneoutliertypetwo}
Choose any $v \in \mathbb{R}^{P}$ with $\|v\|=1$ and any constant $c > 0$. Let $(x_1, y_1) = (\lambda v, \lambda  c)$. Let $(x_n, y_n)_{n=2}^N$ be any points in $\RR^{P} \times \RR$ such that $\designmatrix_{-1}$ has rank $P$. Let $\hat{\theta}_p$ denote the $p$th entry of the OLS estimator, $\hat{\theta}$, fit without an intercept. Then, for all $1 \leq p \leq P$,
\begin{center}
\begin{minipage}{.35\linewidth}
  \begin{equation}
  \lim_{\lambda \to \infty} \frac{\partial \hat{\theta}_p(\wvec)}{\partial w_1}\Big\vert_{\wvec = 1_N} = 0,
  \label{eq:influence-score-point-one}
  \end{equation}
\end{minipage}%
\hspace{0.1\linewidth}  
and
\begin{minipage}{.45\linewidth}
  \begin{equation}
  \quad \quad \lim_{\lambda \to \infty} \frac{\partial \hat{\theta}_p(\wvec)}{\partial w_2}\Big\vert_{\wvec = 1_N} = \frac{s t}{(v^\top \Ai^{-1} v)^2},
  \label{eq:influence-score-point-two}
  \end{equation}
\end{minipage}
\end{center}
where $s := (v^\top \Ai^{-1} v e_p^\top \Ai^{-1} x_2 - e_p^\top \Ai^{-1} v v^\top \Ai^{-1} x_2)$ and $t := (y_2 v^\top \Ai^{-1} v - c v^\top \Ai^{-1} x_2 - \Bi \Ai^{-1} x_2 v^\top \Ai^{-1} v + \Bi\Ai^{-1}v v^\top \Ai^{-1} x_2)$.
\end{restatable}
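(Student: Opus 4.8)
The plan is to write everything in terms of the rank-one update by which the outlier enters the Gram matrix, apply the Sherman--Morrison formula, and then pass to the limit $\lambda\to\infty$ factor by factor in the influence-score product \eqref{eqn:influence-function-formula}. Since $\hat\theta$ is fit without an intercept, $\designmatrix^\top\designmatrix=\sum_{n=1}^N x_nx_n^\top=\Ai+\lambda^2 vv^\top$ and $\designmatrix^\top y=\sum_{n=1}^N x_ny_n=\lambda^2 c\,v+\Bi^\top$, where I use $x_1x_1^\top=\lambda^2 vv^\top$, $x_1y_1=\lambda^2 c\,v$, and $\sum_{n=2}^N x_ny_n=\designmatrix_{-1}^\top y_{-1}=\Bi^\top$. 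Because $\designmatrix_{-1}$ has rank $P$, $\Ai=\designmatrix_{-1}^\top\designmatrix_{-1}$ is positive definite; hence $\Ai+\lambda^2 vv^\top$ is positive definite (so invertible) for every $\lambda$, and $v^\top\Ai^{-1}v>0$, so no expression below divides by zero.

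First I would record the exact identity, by Sherman--Morrison,
\[
(\designmatrix^\top\designmatrix)^{-1}=\Ai^{-1}-\frac{\lambda^2\,\Ai^{-1}vv^\top\Ai^{-1}}{1+\lambda^2\,v^\top\Ai^{-1}v},
\]
from which $(\designmatrix^\top\designmatrix)^{-1}v=\Ai^{-1}v/(1+\lambda^2 v^\top\Ai^{-1}v)$ and, writing $\hat\theta(\onevec)=(\designmatrix^\top\designmatrix)^{-1}\designmatrix^\top y$, a short computation gives $\hat\theta(\onevec)=\alpha(\lambda)\,\Ai^{-1}v+\Ai^{-1}\Bi^\top$ with $\alpha(\lambda)=\lambda^2(c-\Bi\Ai^{-1}v)/(1+\lambda^2 v^\top\Ai^{-1}v)$; here the apparent $\lambda^4$ and $\lambda^2$ terms cancel, which is the one simplification worth doing carefully. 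Then for $n=1$ both factors of \eqref{eqn:influence-function-formula} are of exact order $1/\lambda$: the leverage-like term is $e_p^\top(\designmatrix^\top\designmatrix)^{-1}x_1=\lambda\,e_p^\top\Ai^{-1}v/(1+\lambda^2 v^\top\Ai^{-1}v)$, and the residual is $y_1-\hat\theta(\onevec)^\top x_1=\lambda\big(c-\hat\theta(\onevec)^\top v\big)=\lambda(c-\Bi\Ai^{-1}v)/(1+\lambda^2 v^\top\Ai^{-1}v)$. Their product is $\Theta(1/\lambda^2)\to 0$, giving \eqref{eq:influence-score-point-one}.

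For \eqref{eq:influence-score-point-two} I would take the limit in each factor of the $n=2$ score separately. Letting $\lambda\to\infty$ in the Sherman--Morrison identity yields $(\designmatrix^\top\designmatrix)^{-1}\to M:=\Ai^{-1}-\Ai^{-1}vv^\top\Ai^{-1}/(v^\top\Ai^{-1}v)$ (note $Mv=0$, the sense in which the outlier direction becomes infinitely stiff), so $e_p^\top(\designmatrix^\top\designmatrix)^{-1}x_2\to e_p^\top Mx_2$, which expanded over the common denominator $v^\top\Ai^{-1}v$ is exactly $s/(v^\top\Ai^{-1}v)$. Likewise $\hat\theta(\onevec)\to\Ai^{-1}\Bi^\top+\tfrac{c-\Bi\Ai^{-1}v}{v^\top\Ai^{-1}v}\Ai^{-1}v$, so the residual $y_2-\hat\theta(\onevec)^\top x_2$ converges, and expanding that limit over $v^\top\Ai^{-1}v$, using $\Bi\Ai^{-1}v=v^\top\Ai^{-1}\Bi^\top$, matches $t/(v^\top\Ai^{-1}v)$ term by term. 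Multiplying the two limits gives $st/(v^\top\Ai^{-1}v)^2$.

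I expect no conceptual obstacle here; the work is bookkeeping. The two places to be careful are (i) verifying the exact cancellation that makes point $1$'s residual $\Theta(1/\lambda)$ rather than $\Theta(\lambda)$ — this is precisely what drives \eqref{eq:influence-score-point-one} — and (ii) matching the somewhat unwieldy closed-form limits $e_p^\top Mx_2$ and $\lim(y_2-\hat\theta(\onevec)^\top x_2)$ against the stated definitions of $s$ and $t$, which I would do by clearing the common denominator $v^\top\Ai^{-1}v$ and comparing monomials in $\Ai^{-1}$.
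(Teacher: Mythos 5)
Your proposal is correct and follows essentially the same route as the paper's proof: both factor the influence score into the leverage-like and residual terms of \cref{eqn:influence-function-formula}, apply the Sherman--Morrison rank-one update to $(\designmatrix^\top \designmatrix)^{-1}$, and let $\lambda \to \infty$, with the outlier's score vanishing at rate $O(1/\lambda^2)$ and the second point's two factors converging to $s/(v^\top \Ai^{-1} v)$ and $t/(v^\top \Ai^{-1} v)$. The only difference is bookkeeping: you pass the limit through each factor (via the limiting matrix $M$ and a closed form for $\hat{\theta}(\onevec)$), whereas the paper writes each factor as an explicit rational function of $\lambda$ and takes the limit of the combined fraction.
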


\cref{eq:influence-score-point-one} is the influence score for the first data point (the outlier). And so \cref{prop:oneoutliertypetwo} tells us that under mild conditions, an extreme outlier has a small influence score. \Cref{eq:influence-score-point-two} is the influence score of an (arbitrary) other point in the data set, and we see that it converges. When its limit is not equal to $0$, \cref{prop:oneoutliertypetwo} implies that, for extreme enough outliers in the response and feature directions, the influence score of the outlier point will be smaller in magnitude than that of another point in the data set. Note, when $P = 1$, $s$ is always equal to $0$, so the right hand side of \cref{eq:influence-score-point-two} is also $0$. Next, we give an example (in $P>1$) where $s,t \neq 0$, and the righthand side of \cref{eq:influence-score-point-two} is nonzero.

\begin{example}
Consider the data set with $\designmatrix = 
    \begin{bmatrix}
        \lambda & 0\\
        3 & 4\\
        5 & 6\\
    \end{bmatrix}$, $y = 
    \begin{bmatrix}
        \lambda\\
        2\\
        3\\
    \end{bmatrix}$. Suppose we are making a decision based on the sign of the second effect, $p=2$. 
In this setting, 
\begin{align}
    \lim_{\lambda \to \infty} \frac{\partial \hat{\theta}_p(\wvec)}{\partial w_2}\Big\vert_{\wvec = 1_N} = \frac{s t}{(v^\top \Ai^{-1} v)^2} = 0.0178 > 0.
\end{align}
\label{cor:non-zero-influence-influence-score}
\end{example}
\begin{proof}
    We have $v = \begin{bmatrix}
        1\\
        0
    \end{bmatrix}$, $c = 1$, $x_2 = \begin{bmatrix}
        3\\
        4
\end{bmatrix}$, $y_2 = 2$. It follows that
$\Ai = \begin{bmatrix}
        34 & 42\\
        42 & 52
    \end{bmatrix}$, $s=1$, $t=3$, and $(v^\top \Ai^{-1} v)^2 = 169$. 
\label{cor:concrete-example-for-theory}
\end{proof}
\revision{While these theoretical results are presented without an intercept and our numerical results are fit with an intercept, we get similar failure modes regardless of the inclusion of the intercept. See \cref{sec:fm-no-intercept} for a version of the numerical results fit without an intercept and a more nuanced discussion on the impact of the intercept term.}

\subsection{Multiple outliers}
\label{sec:multi-outlier-examples}
We next identify two realistic cases with multiple outliers far from the bulk of the data such that the additive approximations and FH-Gurobi (without warm start) fail. We support our empirical findings with theory. Essentially we see how failure changes if we have a small group of outliers instead of just a single outlier.

\subsubsection{Simpson's paradox}
\label{sec:simpson}

It is common to have (at least) two noisy subpopulations within a single data set; we consider the case where one subpopulation represents a small fraction of the total. 
For instance, we might have heterogeneity in the population that the regression model does not account for; Simpson's paradox describes the case when the trend within subpopulations reverses the trend across the full populations.


\textbf{Setup.}
In the particular example in \cref{fig:plots-of-examples} (middle), the overall slope (across all the data) has a different sign than the slope in just the red data or just the black data. To create the illustration in \cref{fig:plots-of-examples} (middle), we draw 1,000 red crosses with $x_n \sim \mathcal{N}(0,0.25)$ i.i.d., $y_n = -x_n + \epsilon_n$, and $\epsilon_n \sim \mathcal{N}(0,1)$ i.i.d. We draw 10 black dots with $x_n \sim \mathcal{N}(25,0.25)$, $y_n = -x_n + 40 + \epsilon_n$ i.i.d., and $\epsilon_n$ as before. The OLS-estimated slope on the full data set is 0.52. Dropping the black dots ($1\%$ of the data) yields a slope of -0.99, a sign change.

\textbf{Experimental results.}
We summarize the performance with re-run in the middle table of \cref{fig:plots-of-examples}.
When asked to find the worst-case $1\%$ (i.e., 10 data points) of the data set to drop, both AMIP and Additive One-Exact choose some red-cross points and some black-dot points to drop; see \cref{table:simpsons-paradox-table} in \cref{sec:fm-tables}
for full detail. Both methods predict there will be no sign change (a failure without re-run). Upon removing the flagged data points and re-running the data analysis, in both cases we find no sign change (a failure with re-run). At extra computational expense, both greedy methods flag exactly the black-dot data points as the points to drop, so neither suffers a failure. \revision{At the cost of further increasing compute time, }NetApprox and FH-Gurobi (warm-start) both report that a data subset of size 10 exists to flip the sign, while FH-Gurobi without warm-start fails to report the existence of such a subset.

\textbf{Discussion.} 
Once we leave the regime of one data point, we see that both additive methods (AMIP,  Additive One-Exact) and FH-Gurobi (without warm start) can fail. We see that errors can arise when we approximate the change in dropping a group of data points by the sum of the changes of dropping individual data points. This phenomenon is known more broadly as masking, where one outlier can hide the effect of another \citep{belsley2005regression,atkinson1986masking}. 
To overcome masking problems, previous work has noted the success of using greedy procedures, both in problems of outlier detection \citep{hadi1993multipleoutliers,lawrance1995deletion} as well as in the problem of identifying the Maximum Influence Perturbation \citep{kuschnig2021hidden}. Although both of our multi-outlier examples (Simpson's paradox here and poor conditioning below) demonstrate the phenomenon of masking, the failure modes we surface are distinct from the simulation studies of \citet{kuschnig2021hidden}; our examples demonstrate settings where removing a small fraction of the data can lead to a change in sign of the regression coefficient---a failure of the approximation, according to our definition in \cref{sec:fm}. The examples in \citet{kuschnig2021hidden} demonstrate cases where AMIP can misestimate the change of an effect's magnitude in the face of masking. But these examples demonstrate neither a failure without re-run or a failure with re-run. See \cref{sec:masking} for more discussion on masking.

\subsubsection{Poor conditioning}
\label{sec:condition}

We do not expect that the alignment of data points within the outlier cluster (black dots) in the example above will be germane to our results. However, recall that the OLS objective is not rotationally invariant in the two-dimensional space defined by a single covariate and response. So it is a priori possible for the alignment of the relative trend in the bulk of the data (red crosses) and the trend across the full data to matter. In particular, we next consider a case where the bulk of the data is ill-conditioned on its own.

Poorly-conditioned data are a common concern to users of OLS regression \citep{chatterjee1986influential}.
Here, we adapt an example presented in \citet{moitra2022provably}. In the example of \citet{moitra2022provably}, the data points were constructed to lie perfectly along two straight lines (see \cref{fig:poor-conditioning-mr22}). Moreover, the small subset of outlier data points lie perfectly along the OLS-estimated slope for the entire data set. The removal of the black points causes all variation along feature space to be lost and the OLS solution to become ill-defined. 

\textbf{Setup.}
We alter the adversarial setup of \citet{moitra2022provably} into one that might arise in natural data settings with no adversary (see \cref{fig:plots-of-examples} (right)). To that end, we add generous amounts of noise to both red-cross points and black-dot points and translate the black-dot points to no longer lie along the OLS-estimated slope.
%
%
We generate the red crosses so as to have poor conditioning; since there is much more noise around the (zero) trend than variation in the covariates, there is no clear regression solution.
In particular, we generate the 1,000 red crosses with $x_n \sim  \mathcal{N}(0,0.001)$ i.i.d., $y_n = \epsilon_n$, and $\epsilon_n \sim  \mathcal{N}(0,1)$ i.i.d. We draw the 10 black dots as $x_n \sim \mathcal{N}(-1,0.01)$ i.i.d., $y_n = -x_n - 10 - \epsilon_n$, and $\epsilon_n$ as before. When we consider both black dots and red crosses together as a single data set, there is no poor conditioning. The OLS-estimated slope on the full data set is around 7.40; dropping the black dots ($1\%$ of the data) yields a slope of about -1.04, a sign change.

\textbf{Experimental results.}
We summarize the performance with re-run in the right table of \cref{fig:plots-of-examples}.
\revision{We ask each method whether it is possible to drop at most $1\%$ (10 data points) of the data set and change the sign of the effect}. Both AMIP and Additive One-Exact choose some red crosses and some black dots; see \cref{table:mcl-table}. Both methods \revision{in turn} suffer failures with and without re-run. 
The greedy methods are more computationally expensive but succeed. NetApprox and FH-Gurobi (warm-start) both report that a data subset of size 10 exists to flip the sign, while FH-Gurobi without warm-start fails to report the existence of such a subset.

\subsubsection{Theory}

A common theme in our multi-outlier examples is that, in general, the impacts of data-dropping are non-additive.
Our theory illustrates that, even in a simple setup, data points can mask each other's impacts. In particular, our theory shows we can have masking issues even in OLS with just a single covariate and no intercept, and even between just two outlier data points. The issues in our one-outlier theory (\cref{prop:oneoutliertypetwo})
could be overcome by using a One-Exact method. But our next results show that Additive One-Exact falls prey to masking issues.




\begin{restatable}{proposition}{twopointstoinfinity}
\label{prop:two-points-going-to-infinity-2}
Let $\lambda, c \in \RR$. Consider a pair of data points, $(x_1, y_1) = (\lambda, \lambda)$ and $(x_2, y_2) = (\lambda, \lambda + c)$. Let $(x_n, y_n)_{n=3}^N$ be any points in $\RR \times \RR$ such that at least one of $(x_n)_{n=3}^N$ is non-zero.
We apply OLS to the single covariate $x$ and response $y$ with no intercept; we make a decision based on the sign of the resulting effect size.
As $\lambda \to \infty$, the Additive One-Exact approximation (\cref{sec:additive-methods}) to the change in effect size from dropping $(x_1, y_1), (x_2, y_2)$ tends to zero, while the true change in effect size tends to $1 - (\sum_{n\neq 1, 2}^N x_n y_n/\sum_{n \neq 1,2}^N x_n^2)$. 
\end{restatable}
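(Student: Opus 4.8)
The plan is to reduce everything to the closed form of the single-covariate, no-intercept OLS estimator, $\hat{\theta}(\wvec) = \left(\sum_n w_n x_n y_n\right)/\left(\sum_n w_n x_n^2\right)$, valid whenever the denominator is nonzero. Write $S_{xx} := \sum_{n=3}^N x_n^2$ and $S_{xy} := \sum_{n=3}^N x_n y_n$. The hypothesis that some $x_n$ with $n \ge 3$ is nonzero gives $S_{xx} > 0$, so every estimator below is well defined for all $\lambda \neq 0$ and the limits as $\lambda \to \infty$ are legitimate. Using $x_1 y_1 + x_2 y_2 = 2\lambda^2 + \lambda c$ and $x_1^2 + x_2^2 = 2\lambda^2$, the full-data fit is $\hat{\theta}(\onevec) = (2\lambda^2 + \lambda c + S_{xy})/(2\lambda^2 + S_{xx})$, which tends to $1$ as $\lambda \to \infty$ (divide numerator and denominator by $2\lambda^2$).

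First I would compute the true change. Dropping both of points $1$ and $2$ leaves the estimator $S_{xy}/S_{xx}$, which does not depend on $\lambda$. Hence the exact change in effect size induced by dropping $\{(x_1,y_1),(x_2,y_2)\}$, namely $\hat{\theta}(\onevec)$ minus the refit value (in the convention of \cref{eqn:combinatorial-optimization}), tends to $1 - S_{xy}/S_{xx}$, which is exactly the stated limit.

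Next I would compute the Additive One-Exact approximation (\cref{sec:additive-methods}), which evaluated on the pair $\{1,2\}$ is $\Delta_1 + \Delta_2$, where $\Delta_i$ is the exact change in effect size from deleting point $i$ alone. Deleting point $1$ gives $\hat{\theta} = (\lambda^2 + \lambda c + S_{xy})/(\lambda^2 + S_{xx})$, and deleting point $2$ gives $\hat{\theta} = (\lambda^2 + S_{xy})/(\lambda^2 + S_{xx})$; dividing through by $\lambda^2$, each tends to $1$ as $\lambda \to \infty$. Since $\hat{\theta}(\onevec) \to 1$ as well, both $\Delta_1 \to 0$ and $\Delta_2 \to 0$, so their sum, the Additive One-Exact approximation, tends to $0$. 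Combining with the previous paragraph proves the proposition.

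There is no serious obstacle here; the argument is short bookkeeping. The two things to handle carefully are (i) confirming that the denominators $2\lambda^2 + S_{xx}$, $\lambda^2 + S_{xx}$, and $S_{xx}$ are strictly positive, which is precisely why we assume some $x_n \neq 0$ for $n \ge 3$; and (ii) pinning down ``the Additive One-Exact approximation to the change from dropping these two points'' as the sum $\Delta_1 + \Delta_2$ of the exact single-deletion changes (the sense in which an additive approximation scores a given candidate subset). Conceptually the content is the masking picture the proposition is meant to display: the two outliers share the single high-leverage direction $x = \lambda$, so deleting either one alone leaves the other pinning the fit near slope $1$, and no individual deletion reveals the $O(1)$ swing that deleting both produces. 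If one prefers rates to limits, expanding in $1/\lambda$ gives $\Delta_1, \Delta_2 = O(1/\lambda)$ (with leading terms $\mp c/(2\lambda)$ that in fact cancel in the sum), but only the limits are needed.
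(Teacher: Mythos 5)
Your proof is correct and follows essentially the same route as the paper's: write out the closed-form single-covariate, no-intercept OLS estimators for the full data, each leave-one-out fit, and the leave-both-out fit in terms of $S_{xx}=\sum_{n\ge 3}x_n^2$ and $S_{xy}=\sum_{n\ge 3}x_ny_n$, then take $\lambda\to\infty$. The only cosmetic difference is that you pass to the limit in each single-deletion change separately (each tends to $0$) rather than combining them into one rational function of $\lambda$ as the paper does, which changes nothing of substance.
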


In the setup of \cref{prop:two-points-going-to-infinity-2}, $P=1$, and there is no intercept. So the assumption that at least one of $(x_n)_{n=3}^N$ is non-zero is equivalent to an assumption that the design matrix with the first two points removed is full rank. That is, the assumption ensure the OLS solution remains well-defined after dropping the first two data points. See \cref{sec:multi-outlier-failure-mode-theory} for the proof of \cref{prop:two-points-going-to-infinity-2}.

\cref{prop:two-points-going-to-infinity-2} tells us that, as a pair of points is taken to infinity together, the sum of their individual impacts (i.e., the change in effect size from dropping each point individually) always approaches zero, regardless of what the group impact (i.e., the change in effect size from dropping the pair together) approaches. Indeed, the impact of dropping the group may result in a substantive change in effect size, information that cannot be gleaned solely from looking at individual impacts. 

\revision{While the theory in this section are presented without an intercept term, we find that the exclusion of the intercept has very little effect on the numerical results presented in \cref{sec:fm}. In particular, we run versions of the numerical results fit without an intercept term in \cref{sec:fm-no-intercept} and find little difference.}



\subsection{Greedy failures}
\label{sec:greedy_failures}
\citet[Section 5.1]{moitra2022provably} 
construct a failure mode of greedy approximations in an adversarial way. We find it helpful to visualize their (text) construction; see \cref{fig:poor-conditioning-mr22} in 
\cref{sec:multi-outlier-supplementals}. We surface similar adversarial examples where greedy algorithms fail; see \cref{fig:greedy-amip-fails} and \cref{fig:both-greedy-fail}. But we have yet to identify realistic, non-adversarial examples of failures of \revision{greedy failures, beyond the dropping of a single data point. Furthermore, we were unable to identify \emph{any} realistic failures of Greedy One-Exact.}

\revision{\section{Examples in real-world data}}
\label{sec:realdata}
\revision{We next illustrate failures in real-data analyses. We were able to surface failures in real-data analyses for all but the greedy approximations. We also illustrate various real-data analyses where all methods succeed in \cref{sec:successes-in-real-data}. In each of the following real-world examples, we know that we can drop less than $1\%$ of the data and change the sign of the OLS regression coefficient since at least one method successfully identifies such a set. We check whether all methods do so.}

\revision{The Single-cell Genomics and Ames Housing data sets demonstrate multi-outlier failure modes, while the Bird Morphometrics example demonstrates a one-outlier failure mode.}
\begin{figure*}[t]
    \centering
    \includegraphics[width=\textwidth, trim={0 0cm 0 0cm}, clip]{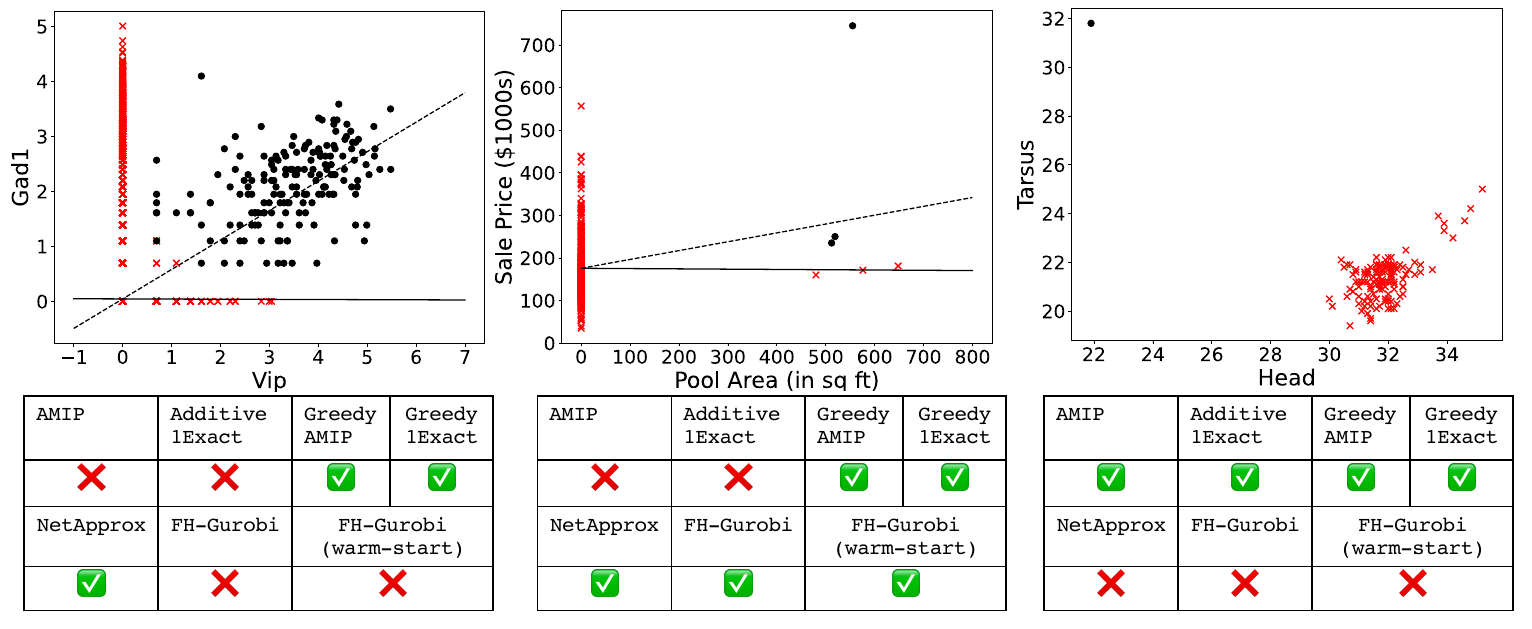}
    \caption{\revision{Our examples: Single-cell Genomics (left), Ames Housing (middle), Bird Morphometrics (right). A dashed line represents the OLS-estimated slope on the entire data set while a solid line represents the slope with black dots removed. We do not plot fitted models for the Bird Morphometrics data since the regression involved two additional covariate dimensions. The tables display the performance of all approximations for the corresponding example: a red X indicates a failure with re-run, and a green check indicates a success.}}
    \label{fig:real-world-examples}
\end{figure*}
\revision{\subsection{Single-cell Genomics}}
\label{sec:genomics}
\revision{\textbf{Setup.} Our first data set is taken from a study on the impact of sensory experiences on gene expression in the mouse visual cortex \citep{hrvatin2018single}. The data set contains a total of $65{,}539$ points. As is common in gene expression data, these data are heavily zero-inflated; $99.5\%$ of the Vip gene values (shown on the x-axis in \cref{fig:real-world-examples} (left)) are $0$, and $97.39\%$ of the Gad1 gene values (shown on the y-axis) are 0. Practitioners are often interested in the association between two genes. We consider a linear regression (with intercept) of Gad1 values on Vip values.}

\revision{This data analysis is not robust to dropping 1\% of the data. In particular, there exists a subset of size $172$ ($0.26\%$ of the data) that, when dropped, can change the sign of the regression coefficient from positive ($0.536$) to negative ($-0.003$). We plot these $172$ points as black dots in \cref{fig:real-world-examples} (left). We plot the remaining points of the data set as red crosses. Note that, due to zero-inflation, many data points are stacked at $(0,0)$.}

\revision{\textbf{Experimental results.}
We summarize the performance with re-run in the left table of \cref{fig:real-world-examples}.
We ask each method whether it is possible to drop at most $1\%$ ($656$ data points) of the data set and change the sign of the effect.}
\revision{Both AMIP and Additive One-Exact predict there will be no sign change (a failure without re-run). Upon removing the flagged worst-case 1\% of data points and re-running the data analysis, in both cases we still find no sign change (a failure with re-run). Both of the greedy methods, however, successfully identify $172$ data points to drop to change the sign, so neither suffers a failure. NetApprox also successfully identifies such a set. FH-Gurobi returns a set of size $1712$ to drop, which is larger than $1\%$ of the data, so it fails here. Interestingly, FH-Gurobi (warm-start) does successfully identify a set of size $172$ points to drop, but the subset it returns is incorrect in that, when dropped, the regression coefficient is still greater than $0$ (it was $0.00150$), and thus no sign change is detected. See \cref{table:gene-table} in \cref{sec:fm-tables} for full details.} 

\revision{\subsection{Ames Housing}}
\label{sec:housing-price}
\revision{\textbf{Setup.} The Ames Housing data set provides a comprehensive collection of residential property data from Ames, Iowa, and is widely utilized data set for regression modeling exercises \citep{de2011ames}. The response variable here is SalePrice, the final selling price of each home, while the feature variables consist of building, land, and facility characteristics. We perform a one-dimensional linear regression (with intercept) of SalePrice on the pool area feature (i.e., the pool area of each property in square feet).} 

\revision{The original data set contained $8{,}760$ points. 
To keep with asking about the robustness to dropping $1\%$ of the data, we thin the data so that $3$ points account for roughly $1\%$ of the total. To perform the thinning, we first set aside all points with non-zero pool area. From the remaining $1{,}453$ points, we uniformly sample $400$ without replacement. After sampling, we reintroduce the previously removed points with non-zero pool area. Finally, we dropped all observations with missing values. After these steps, we have $340$ points in total. No data points were duplicated or synthetically generated; every point in the final data set corresponds with a point in the original.\footnote{\revision{If we had not subsampled the data, the AMIP and Additive One-Exact methods fail for a much small $\alpha$-level ($\alpha=0.002$).}} Dropping $3$ out of these $340$ ($0.88\%$) points changes the sign of the regression coefficient from positive ($207.78$) to negative ($-6.66$). In \cref{fig:real-world-examples} (middle), we plot the three points as black dots and the rest of the data set we consider in red crosses.}

\revision{\textbf{Experimental results.}}
\revision{We summarize the performance with re-run in the middle table of \cref{fig:real-world-examples}. When asked if it is possible to drop at most 1\% (340 data points) of the data and change the sign, both AMIP and Additive One-Exact predict that there will be no sign change (a failure without re-run). Upon removing the flagged worst-case 1\% of points and re-running OLS, in both cases we still find no sign change (a failure with re-run). Both of the greedy methods, however, successfully identify $3$ data points to drop to change the sign, so neither of these greedy methods fails. The OLS-Specific methods succeed: NetApprox identifies the $3$ points to drop to change the sign, as do both versions of FH-Gurobi. See \cref{table:house-table} in \cref{sec:fm-tables} for full details.} 

\revision{\subsection{Bird Morphometrics}}
\label{sec:birds}
\revision{\textbf{Setup.} This data set is taken from an ecological study on the morphometric features of the saltmarsh sparrow \textit{Ammodramus caudacutus} \citep{zuur2010protocol}.
Ecologists commonly use linear models to understand the association between animal features. The outlier in this data set may be a different species, a typing mistake, or indeed a correct record \citep{zuur2010protocol}.}

\revision{The original data set contains $1{,}295$ points. In order for $1$ data point to account for roughly $1\%$ of the total, we sampled $10\%$ of the full data uniformly at random without replacement; it happened that our first random sample retained the outlier point, so we kept this single sample. Then, we ran OLS regression (with intercept) on the features ``head'' (head length), ``wingcrd'' (wing length), and ``culmen'' (beak length). Removing just the single black-dot point ($0.77\%$ of the data) in \cref{fig:real-world-examples} (right) is sufficient to change the sign of the regression coefficient for ``head'' from negative ($-0.69$) to positive ($0.399$). We plot the remainder of the data set we consider as red crosses.}

\revision{\textbf{Experimental results.}}
\revision{We summarize the performance with re-run in the right table of \cref{fig:real-world-examples}. When asked to find the worst-case $1\%$ of the data set to drop, both AMIP and Additive One-Exact succeed, as do both greedy approximations. NetApprox, however, does not identify the point that, when dropped, changes the sign. Similarly, both versions of FH-Gurobi fail here, with each choosing a different subset of $4$ points to drop. Since the subset identified is $>1\%$ of the data, both constitute a failure.\footnote{\revision{When running a 1D version of OLS (fit with an intercept) on the Bird morphometrics data set with head length as the sole predictor, all methods are able to succeed.}}}

\section{Runtime Comparison}
\label{sec:runtime-comparison}

We next compare running times across approximations. \revision{In addition to being the sole algorithm that did not fail any of our tests, we find that Greedy One-Exact can be orders of magnitude faster than the OLS-specific approximations}. All experiments were conducted in Python 3 on a personal computer equipped with an Apple M1 Pro CPU at 3200 MHz and 16 GB of RAM. 

\textbf{Setup.} We simulate data sets of size $N=75{,}000$, and we vary dimension $P$ up to $P=50$. We choose both $N$ and $P$ to be larger than those of any data set considered by \citet{broderick2023automatic}. 
Let $\mathbf{I}_P$ denote the identity matrix of dimension $P$ and $\mathbf{1}_P$ denote the the one-vector of dimension $P$.
We draw samples with $x_n \sim \mathcal{N}(0, \mathbf{I}_P)$ i.i.d.,  $y_n = \langle x_n, \mathbf{1}_P \rangle + \epsilon_n$, and $\epsilon_n \sim \mathcal{N}(0, \mathbf{I}_P)$ i.i.d. 

\textbf{Results.}
We show the runtimes of different approximations in \cref{fig:runtime-comparison}. The additive algorithms, AMIP and Additive One-Exact, run the fastest for all dimensions. At dimension $P=50$, AMIP runs for $0.03$ seconds and Additive One-Exact for $0.05$ seconds. The greedy algorithms are the second fastest. At dimension $P=50$, Greedy AMIP runs for just over a minute ($62.84$ seconds) and Greedy One-Exact for $75.54$ seconds. The third fastest is FH-Gurobi without warm-start. At $P=50$, this algorithm runs for $238.20$ seconds, or just under $4$ minutes. 
NetApprox and FH-Gurobi (warm-start) are substantially slower than the other algorithms. At $P=50$, NetApprox runs for $3589.93$ seconds (just over 59 minutes). At $P=10$, the FH-Gurobi (warm-start) algorithm is unable to run in under 1 hour; it ran for $3968.39$ seconds, or just over $66$ minutes. We are unable to run FH-Gurobi (warm-start) for dimensions much larger than $P=10$ in our time budget.
\begin{figure*}[t]
    \centering
    \includegraphics[width=0.56\textwidth, trim={0 0cm 0 0cm}, clip]{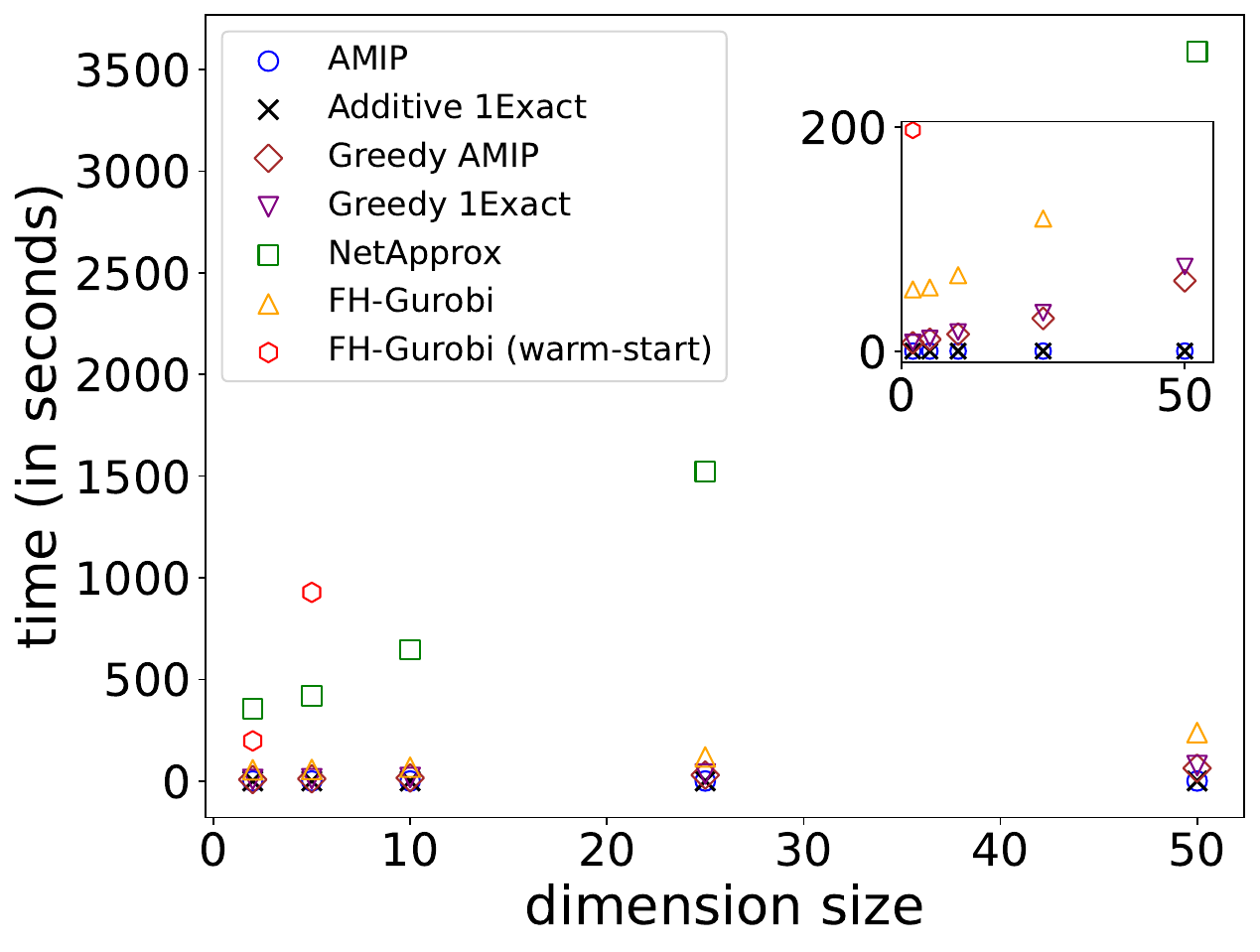}
    \caption{Plot of approximation runtimes on a simulated data set of size $N=75{,}000$. We omit results that take over one hour to run; in particular, FH-Gurobi (warm-start) takes over one hour on dimension sizes 10 or larger. The plot includes an inset zoomed in on algorithms that run in under 200 seconds (i.e., the greedy and additive algorithms, as well as particular settings of FH-Gurobi without warm start).}
    \label{fig:runtime-comparison}
\end{figure*}
\revision{Notably, in the present experiment, Greedy One-Exact can be over 47 times faster (at $P=50$) than the OLS-specific methods.}

\section{Discussion}
\label{sec:conclusion}
In the present work, we identify non-adversarial failure modes of approximations to the Maximum Influence Perturbation. We focus on linear regression fit with ordinary least squares and where the decision is made based on the sign of an effect. For users interested in the Maximum Influence Perturbation for this case, we recommend the following: (1) running Greedy One-Exact if the user is willing to incur the computational expense and (2) that users visualize their data with diagnostic plots (e.g., scatter plots, leverage plots, residual plots). Our recommendation of Greedy One-Exact agrees with that of \citet{kuschnig2021hidden}, \revision{though the notion of failure guiding our comparison is different.}


Across our experiments, we find that Greedy One-Exact is able to successfully detect non-robustness \revision{when faced with both synthetic and real-world data examples. Additionally, we find that it is orders of magnitude faster than the OLS-specific approximations on plausibly-sized data sets.} Our experiments suggest that, for data sets with $N \le 75{,}000$ and $P \le 50$, Greedy One-Exact should not take much longer than a minute to run; we believe this running time cost should not be prohibitive for users. Moreover, Greedy One-Exact is conceptually straightforward and should be straightforward to implement in practice; unlike NetApprox and FH-Gurobi, it does not require the use of commercial software that may not be accessible to all users.


\label{sec:fh_gurobi_failures}
\revision{Since NetApprox and FH-Gurobi were originally designed to estimate upper bounds on the \emph{number} of points that must be removed to induce a sign change in a regression coefficient, it is not surprising that these algorithms might struggle with identifying a particular small subset of data points to drop to achieve a sign change. For example, in the bird morphometrics dataset, NetApprox accurately estimates the number of points that need to be dropped but fails to pinpoint the specific data point whose removal flips the sign. A similar issue arises with FH-Gurobi (warm-start) on the single-cell genomics data. Additional failure cases for both FH-Gurobi variants can be attributed to the looseness of the returned upper bounds.} 

While the additive methods also fail our tests, we note that AMIP can be over 2500 times faster than Greedy One-Exact in our experiments here with $N=75{,}000, P=50$. Users with larger problems (in $N$ or $P$) and smaller compute-time budgets may still find it useful to run additive approximations if the Greedy One-Exact becomes prohibitive in cost. \revision{We find that the additive methods (AMIP, Additive One-Exact) tend to fail in the presence of points with extreme leverage scores, as seen in the scatter plots in \cref{fig:plots-of-examples} and in the residual-leverage plots in \cref{fig:residleverage_plots}. While AMIP may not be able to detect such non-robustness, a simple visualization of the data certainly would. This finding further highlights the importance of visualizing the data alongside running data-dropping robustness checks.}

\revision{Not only is linear regression widely used in practice, but we believe that finding approximation failure modes in such a simple and intuitive analysis should lead us to suspect failure modes in more complex analyses until proven otherwise. It}
remains to investigate how alternative models (beyond low-dimensional linear regression) might interact with different data arrangements to affect approximation quality. Geometries of interest include those arising from high-dimensional covariates, generalized linear models (and other cases with constrained residuals), constrained parameter spaces (e.g., for variance parameters),\footnote{This geometry was flagged as challenging in \citet[Section 4.4.2]{broderick2023automatic}.}
mixed-effect models (related to Bayesian hierarchies),\footnote{This geometry was flagged as potentially challenging in \citet[Section 6.3]{nguyen2024mcmc}
.} and other more-complex models. 
\revision{Across our synthetic and real-world data sets}, we find that the leverage scores of the black-dot points are extremely large; see \cref{fig:residleverage_plots}. High-leverage points may not have the same impact on an analysis under alternative geometries.
Moreover, it remains to investigate approximation performance for decisions beyond the sign of an OLS-learned effect; for instance, decisions based on statistical significance or Bayesian posterior means and variances are especially widespread.
Finally, it remains to investigate the speed and accuracy trade-offs in additive and greedy approaches for data analyses more computationally expensive than the low-dimensional linear regression examples considered here; there exist many cases where a practitioner is not willing to re-run their data analysis $\floor{\alpha N}$ times.


\section{Acknowledgments}
This work was supported in part by an ONR Early Career Grant, the MIT-IBM Watson AI Lab, the NSF TRIPODS program (award DMS-2022448), and a MachineLearningApplications@CSAIL Seed Award. We are grateful to Tom Rainforth, Vishwak Srinivasan, Ittai Rubinstein, and Ryan Giordano for helpful discussions.

\bibliography{main.bib}  

\newpage
\appendix
\label{appendix}
\section{Code}
\label{sec:code}
Code is available at \url{https://github.com/JennyHuang19/gradientBasedDataDroppingFailureModes}, including all scripts for reproducing the results in this paper. 

\section{Related works}
\label{sec:related-works}

\subsection{Related notions of robustness}

In this section, we discuss related notions of robustness and explain why robustness to worst-case data dropping provides a new and useful check on generalizability. Many tools in statistics, such as p-values and confidence intervals, are meant to measure the generalizability of sample-based conclusions \citep{fisher1925statistical}. Similarly, works on algorithmic stability in the learning theory literature quantify an algorithm's generalization error \citep{bousquet2002stability}. However, these tools rely on an assumption that the data are drawn independently and identically (i.i.d.) from the underlying target population. In most real world settings, we cannot assume the population that the samples are drawn from is identical to the target population. For instance, we might look to apply the conclusions from a specific sample to a slightly different future population. Departing from the i.i.d. regime, we can no longer rely on the theory behind classical tools alone to tell us something definitive about the generalizability of sample-based conclusions. Prior works have also studied the robustness of estimators to gross outliers and adversarially corrupted samples, arbitrary corruptions of a data-point or small collection of data points \citep{hampel1974influence,madry2017towards,liu2022learning}. The influence function has played a central role in the study of gross outlier sensitivity since the pioneering work of \citet{hampel1974influence}. Specific to linear regression, \citet{cook1977detection} introduced Cook's Distance, for detecting outliers and gross errors. However, conclusions may still fail to generalize in the absence of gross outliers \citep{broderick2023automatic}. As these related notions of robustness alone do not provide a comprehensive check, a data analyst might still worry about generalizability if dropping a very small fraction of the sample can lead to drastically different conclusions.

\subsection{Computational difficulties of determining robustness to worst-case data dropping}
\label{sec:computational-difficulties}

An exact computation of the Maximum Influence Perturbation is computationally intractable. A brute force approach involves enumerating over every possible data subset, which amounts to rerunning $\binom{N}{\floor{\alpha N}}$ data analyses, where $N$ is the number of points in the dataset. Inspired by the Maximum Influence Perturbation problem, \citet{moitra2022provably,freund2023towards,rubinstein2024robustness} tackle a slightly different but related problem, one of finding the minimum number of samples (in a fractional sense) that need to be removed to zero out a particular regression coefficient.
For OLS regression, \citet[Theorem 1.3]{moitra2022provably} shows that there is no $N^{o(P)}$ time algorithm that, given a non-negative integer $k$, can determine whether the \textit{minimum} number of samples that need to be dropped to zero out a regression coefficient is less than or equal to $k$.\footnote{\citet{moitra2022provably} worked with a fractionally relaxed version of this problem, where the weight of a data point can take on non-integer values. This result precludes the existence of a faster solution in the integer version, as one could use the integer version to solve the weighted version (up to an approximation) by making several copies of the dataset.} More specifically, \citet{moitra2022provably} shows that this problem requires $N^{\Omega(P)}$ time. Although their complexity result applies to the slightly different problem of determining the \textit{minimum} number that need to be dropped to change the sign of the regression coefficient, this hardness result also applies to our problem of determining the existence of a subset of size at most $100\alpha\%$ of the data that can be dropped to change the sign. Consider the following reduction. Suppose there existed an algorithm that solves the existence problem in faster than $N^{\Omega(P)}$ computations. Specifically, for some $k$ less than $100\alpha\%$ of the data, the algorithm determines the existence of a subset of size at most $k$ that can be dropped to change the sign of the regression coefficient. If such an algorithm were to exist, we could run it for $k \in [1, \floor{\alpha N}]$ (an operation that is $O(N)$) to determine the minimum number of samples required to be dropped to zero out the regression coefficient. The existence of such an algorithm would contradict \citet[Theorem 1.3]{moitra2022provably}, which shows that such a task requires at least $N^{\Omega(P)}$ computations. Thus, determining robustness to small-fraction data-dropping is expensive, even in the simple setting of OLS linear regression. This prompts the need for approximations.

\subsection{Approximations to (non-worst-case) data dropping}
\label{sec:known-subsets}

In this section, we discuss tangential works that use approximations to data dropping in settings where the subset to drop is known. Since we are concerned with developing algorithms to overcome the combinatorial problem of searching for some worst-case subset to induce the largest change in a quantity of interest, the works discussed in this section do not provide a fast way to search for the worst-case data subset to drop. 

The idea of using approximations to data dropping goes as far back as \citet{jaeckel1972jackknife} and \citet{hampel1974influence}, who introduced the influence function in the context of robust statistics. Shortly after, \citet{cook1977detection} introduced influence measures, such as Cook's Distance, in the context of detecting outliers and gross errors. \citet{pregibon1981logdiagnostics} introduced the one-step Newton approximation in the context of logistic regression diagnostics. 
As models become increasingly expensive to run and datasets increasingly large, data-dropping approximations have gained popularity in several areas of machine learning. Many works have developed gradient-based approximations for cross validation \citep{beirami2017optimal,rad2020loocv,giordano2019swiss,stephenson2020approximate,ghosh2020approximate}. Works in the data privacy space have used approximations for deleting user data from models \citep{guo2019certified,sekhari2021remember,suriyakumar2022algorithms}. 
Within the model interpretability and data attribution space, methods such as Shapley value estimators \citep{ghorbani2019shapley} and datamodels \citep{ilyas2022datamodels} require retraining the model a large number of times on different subsets of the data in order to quantify the impact of particular training points on the model output. In response, several works have proposed using approximations to retraining based on the influence function \citep{koh2017understanding,koh2019accuracy,park2023trak}. These gradient-based based approximations achieve great gains in computation while maintaining comparable accuracy to methods that rely on model retraining, as shown in \citet[Figure 1]{park2023trak}. While these works \citep{koh2017understanding,koh2019accuracy,park2023trak} investigate the performance of data dropping approximations for the task of dropping a pre-defined subset, we investigate the performance of data-dropping approximations for the task of dropping a \textit{worst-case} data subset. The performance of an approximation may be quite different on an average-case compared to the worst-case data subset.


\revision{\subsection{Case Influence Analysis}}
\label{sec:case-influence}
\revision{The case influence analysis literature studies the importance of individual or groups of cases (data points) on posterior distributions \citep{bradlow1997case,carlin1991expected,zhu2012bayesian}, predictive distributions \citep{johnson1983predictive}, and likelihoods \citep{cook1986assessment,zhu2007perturbation}. Like the recent works on Maximum Influence Perturbation (MIP) \citep{broderick2023automatic, kuschnig2021hidden}, these works use first-order approximations to avoid re-running a full model; while the broad aim of both lines of work is to assess the impact of deleting cases on inferential results in a data analysis, there are a handful of notable distinctions between these works.}

\revision{Works that develop approximations to the MIP propose algorithms to solve the optimization problem of searching for the worst-case data subset, i.e.~the subset of data whose removal maximally changes some user-specified inferential quantity. Works in the case influence literature do not present solutions to this worst-case data-subset search. Rather, the case influence literature focuses on the impact of deleting observations that are specified in advance. While \citet[Theorem 2]{zhu2012bayesian} uses a first-order approximation that appears to asymptotically treat the case of dropping a non-vanishing proportion of the data, Theorem 2 is proved for a particular dropped subset, rather than a uniform bound over all subsets of a particular size, a key property allowing AMIP to approximate the deletion of the worst-case data subset. This is again shown in simulation studies, where \citet{zhu2012bayesian} work with pre-specified subsets, namely those observations within a group in a Bayesian hierarchical model, and compares the
posterior summary statistics obtained after deleting these subsets to those on the full data.}

\revision{Furthermore, works in case influence typically concern whole-model sensitivity metrics---the likelihood function \citep{cook1986assessment,zhu2007perturbation}, posterior distribution \citep{zhu2012bayesian}, and predictive distribution \citep{johnson1983predictive}, rather than a particular inferential quantity, such as a particular OLS regression coefficient. Sensitivity of a single regression coefficient is not sufficient to imply sensitivity to more global objects, like the entire vector of coefficients. Works on approximations to the MIP instead let the analyst specify any inferential target---often a single quantity that drives a data analysis conclusion, such as the sign or significance of one OLS coefficient---and asks how fragile that quantity of interest is to worst-case data deletion.}

\subsection{Masking}
\label{sec:masking}
We identify cases where masking, a situation where one outlier ``hides'' the effect of another outlier, can interfere with finding the Most Influential Set. Masking itself has been a widely studied phenomenon in the context of multi-outlier detection. \citet{hampel1974influence}, a foundational text in robust statistics, notes that masking can make the identification of multiple outliers cumbersome and erroneous \citep[Section 1.4]{hampel1974influence}. 
\citet[Section 2.1]{belsley2005regression} also discussed the masking phenomenon and proposed a stepwise procedure for identifying groups of influential outliers. \citet{bendre1989masking} noted that masking can change the results for some common multiple outlier tests. \citet{lawrance1995deletion} noted that masking can create errors in common diagnostic quantities, such as Cook’s Distance. In the context of outlier detection, \citet{atkinson1986masking} proposed a solution that is able to mitigate the effects of masking: it is based on a two-step procedure that first fits subsamples of the data using least median of squares regression, which identifies potential groups of outliers, then uses single-point influence measures to confirm whether the points identified are indeed outliers. \citet{gray1984k} note that the off-diagonal entries of the Hat matrix contain information about pairwise interaction effects between data points and propose an algorithm that uses this information in an attempt to overcome masking when identifying influential subsets. Both \citet{lawrance1995deletion,chatterjee1988sensitivity} examine the the deletion of a pair of cases. \citet{lawrance1995deletion} focused solely on Cook’s distance as the quantity of interest, while \citet{chatterjee1988sensitivity} examines a range of influence measure. The works find that the degree of masking between two data points is a function of the residuals, the leverages, and the off-diagonal entries of the Hat matrix.
To address masking effects in this work, we consider a stepwise approach (Greedy AMIP and One-Exact) similar to the one taken in \citet[Section 2.1]{belsley2005regression} to overcome a combinatorial problem of searching for a Most Influential Set. The concurrent work of \citet{hu2024most} present theory on the non-additivity of data-dropping as well, but their focus is on most influential subset selection as opposed to detecting sensitivity of statistical conclusions to worst-case small-fraction data dropping. Finally, \citet{kuschnig2021hidden} compare greedy and additive approximations in context of worst-case small-fraction data dropping; we build on their work by (1) identifying instances where these methods result in conclusive failures through the definitions in \cref{sec:fm}, and (2) providing mathematical insight into these failure modes, as we present in \cref{sec:fm}.

\subsection{Lower bound algorithms}
\label{sec:robustness-auditing}
Following \citet{broderick2023automatic}, a line of works \citep{moitra2022provably,freund2023towards,rubinstein2024robustness} provide lower bounds on the number of points that must be removed to zero out a particular regression coefficient in OLS linear regression. 
However, these lower bound algorithms (1) do not identify a Most Influential Set and (2) do not compute an approximation to the Maximum Influence Perturbation, so we do not compare to these methods in this work. 


However, the line of algorithms providing lower bounds can inspire future methodological development for the Most Influential Set problem in OLS. For example, \citet{freund2023towards} introduce a spectral algorithm that takes a more global approach than those taken by either the additive or greedy approximations. As another example, \citet{rubinstein2024robustness} introduce a lower bound algorithm based on analyzing the error term between the AMIP approximation and the true effect of rerunning an analysis without the dropped subset. They then provide new ways to upper bound the error term expression, offering mathematical insight into the error accrued by using influence-functions based approximations to the Maximum Influence Perturbation. Despite these connections, there is currently no direct way to use these algorithms to identify a Most Influential Set, so it is not the focus of this paper.

\subsection{Failure modes}
\label{sec:failure-modes-related-works}
Past works have pointed out cases where worst-case data-dropping approximations may perform poorly, but these works do not define notions of failure within the context of generalization of sample-based conclusions. We define these notions of failure concretely in \cref{sec:fm} and then surface examples of failures with respect to these specific definitions. Whereas previous works have pointed out failures in adversarial examples and in settings of dropping larger data fractions $>1\%$ \citep{moitra2022provably,freund2023towards}, we are concerned with failures that might arise in natural data settings without an adversary and (for purposes of generalization) where dropping a surprisingly small fraction of data leads to changes in conclusions. 
Certain past works have surfaced failure modes of AMIP in real-world settings without further investigations.
Specifically, \citet[Section 4.3]{broderick2023automatic} and \citet[Section 6.2]{nguyen2024mcmc} point to settings where the approximation performs poorly on a study on microcredit \citep{angelucci2009indirect}. \citet[Section 4.3]{broderick2023automatic} points out a failure of the AMIP in a setting where the quantity of interest is a hypervariance parameter in a hierarchical model. Here, AMIP approximates a positive effect while the actual refit gives a negative effect. This failure modes has another layer of distinction from our problem, as it points to a failure that may arise due to a constrained parameter space. \citet[Section 6.2]{nguyen2024mcmc} point out a setting where the approximations perform poorly for a component of a hierarchical model fitted with MCMC; in particular, they identify a setting where the confidence interval for AMIP undercovers. Finally, for the same microcredit study, \citet{kuschnig2021hidden,moitra2022provably,freund2023towards} compare the performance of different approximation algorithms but not within the context of the failure definitions laid out in \cref{sec:fm}.

\section{Approximation supplementals}
\subsection{AMIP supplementals}
\label{sec:amip-derivation}

\citet{broderick2023automatic} consider a linear approximation to dropping data that can be used in any setting where the loss function $f(d_n; \theta)$ is twice continuously differentiable in $\theta$. They define a quantity-of-interest, $\phi(\theta, \wvec)$, to be a scalar related to the conclusion of a data analysis, which one is concerned about observing a change in upon dropping a very small fraction of data. Common quantities of interest in a data analysis include the sign or significance of a regression coefficient.

Specifically, they linearize the quantity-of-interest as a function of the data-weights
\begin{align}
    \phi^{\textrm{lin}}(\wvec) = \phi(\onevec) + \sum_{n=1}^N (w_n - 1) \frac{\partial \phi(\wvec)}{\partial w_n}  \Big\vert_{\wvec = \onevec}.
    \label{eqn:amip-linear-approximation}
\end{align}
The derivative $\frac{\partial\phi(\wvec)}{\partial w_n}  \big\vert_{\wvec = \onevec}$ is known as the \textit{influence score} of data point $n$ for $\phi$ at $\onevec$. 

Although the AMIP methodology has been developed and used for general quantities of interest, $\phi(\hat{\theta}(\wvec), \wvec)$, we focus on the change in sign of a specified regression coefficient; thus, $\phi(\hat{\theta}(\wvec), \wvec) = -\hat{\theta}_p(\wvec)$. Under the general setting where $\hat{\theta}(\onevec)$ is the solution to the equation $(\sum_{n=1}^N \nabla_{\theta} f(\hat{\theta}(\onevec), d_n)) = 0_P$ (which is the case in our setup as $\hat{\theta}(\onevec)$ is a minimizer of a loss function) the implicit function theorem allows us to transform a derivative in $\wvec$ space into a derivative in $\theta$ space
\begin{align}
    \frac{\partial \hat{\theta}(\vec{w})}{\partial w_n}\Big\vert_{\vec{w} = 1_N} 
    \!= - \hessian(\onevec)^{-1} \nabla_{\theta} f(\hat{\theta}(\onevec), d_n)
    \label{eqn:theta-derivative}
\end{align}
where $\hessian(\wvec) \coloneqq \sum_{n=1}^N w_n \nabla_{\theta}^2  f(\hat{\theta}(\onevec), d_n)$ is the Hessian of the weighted loss. See \citet{broderick2023automatic} for a detailed derivation of \cref{eqn:theta-derivative}.
\revision{
In the context of OLS, we consider the squared error loss, \(f(\hat{\theta}(\onevec), d_n) = (y_n - \hat{\theta}(\onevec)^\top x_n)^2.\) The gradient for this loss is \(\nabla_{\theta} f(\hat{\theta}(\onevec), d_n) = 2 x_n (y_n - \hat{\theta}(\onevec)^\top x_n)\), and the Hessian is \(\nabla^2_{\theta} f(\hat{\theta}(\onevec), d_n) = 2 x_n x_n^\top\) \citep{belsley2005regression}. Thus, from \cref{eqn:theta-derivative}, we get the expression for the influence score for data point \(n\),
\begin{align}
    \frac{\partial \hat{\theta}(\vec{w})}{\partial w_n}\Big\vert_{\vec{w} = 1_N} 
    \!=
    - \underbrace{(\designmatrix^\top \designmatrix)^{-1}x_n}_{\text{leverage-like term}}\underbrace{(y_n - \hat{\theta}(\onevec)^\top x_n)}_{\text{residual term}}.
    \label{eqn:derivation-ols-if}
\end{align}}
Let $e_p$ be the $p$th standard basis vector. Then the linear approximation in the setting where the quantity of interest is the sign of the $p$th regression coefficient becomes 
\begin{align}
\hat{\theta}_p^{\textrm{lin}}\!(\wvec) 
& \!=\!\hat{\theta}_p(\onevec) + e_p^\top \!\hessian(\onevec)^{-1} \!\sum_{n=1}^N (w_n - 1) \nabla_{\theta} f(\hat{\theta}(\onevec), d_n).
\label{eqn:amip-approx}
\end{align}


\subsection{Additive One-step Newton approximation}
\label{sec:additive-one-step-newton}
Past work has proposed using the one-step Newton (1sN) approximation to estimate how much dropping a pre-defined subset of data changes the loss, for general losses \citep{beirami2017optimal,sekhari2021remember,koh2019accuracy,ghosh2020approximate}. 
When we simultaneously consider (a) OLS linear regression and (b) our particular (effect-size) quantity of interest, Additive One-step Newton is equivalent to the Additive One-Exact approximation. So for the experiments in this work, there is no distinction. Nonetheless, we develop a general form of the Additive One-step Newton approximation, as it will be useful in future works on worst-case data dropping beyond OLS; in particular, in models that are expensive to run, a practitioner might be unwilling to incur the cost of running Additive One-Exact.

The one-step Newton approximation works by optimizing a second-order Taylor expansion to the loss around $\wvec = \onevec$. In the case where we must search for the worst-case data subset to drop, we approximate $\hat{\theta}(w)$ with
\begin{equation}
\hat{\theta}^{\textrm{1sN}}(\wvec)
\coloneqq\hat{\theta}(\onevec) +
\hessian(\wvec)^{-1} \sum\nolimits_{n=1}^N (w_n - 1) \nabla_{\theta} f(\theta(\onevec), d_n).
\label{eqn:1sN-score}
\end{equation}
The one-step Newton approximation allows us to approximate $\hat{\theta}(w) = \argmin \sum_{n=1}^N w_n f(\theta, d_n)$ with a second-order Taylor series expansion (in $\theta$) centered at the estimate for the full data, $\hat{\theta}(1_N)$.
\begin{equation}
\begin{aligned}
    \sum_{n=1}^N w_n f(\theta, d_n) \approx f(\hat{\theta}(1_N), d_n) &+ \sum_{n=1}^N w_n \nabla f(\hat{\theta}(1_N), d_n) (\theta - \hat{\theta}(1_N)) \\ &+ \frac{1}{2} (\theta - \hat{\theta}(1_N))^\top \sum_{n=1}^N w_n \nabla^2 f(\hat{\theta}(1_N), d_n) (\theta - \hat{\theta}(1_N))
\end{aligned}
\end{equation}
In order to solve for $\argmin \sum_{n=1}^N w_n f(\theta, d_n)$, we can minimize the quadratic approximation to get 
\begin{align}
    \hat{\theta}^{\textrm{1sN}}(w) = \hat{\theta}(1_N) + \Big(\sum_{n=1}^N w_n \nabla^2 f(\hat{\theta}(1_N), d_n)\Big)^{-1} \sum_{n=1}^N w_n \nabla f(\hat{\theta}(1_N), d_n).
    \label{eqn:known-subset-1sN}
\end{align}
In the recent machine learning literature, this one-step Newton approximation (\cref{eqn:known-subset-1sN}) has been proposed to estimate the effect of dropping known subsets of data \citep{beirami2017optimal,sekhari2021remember,koh2019accuracy,ghosh2020approximate} in the context of general twice-differentiable losses.

In the setting of simple linear regression, the one-step Newton approximation gives the exact solution to the reweighted OLS estimate of a regression coefficient \citep[Equation 3]{pregibon1981logdiagnostics}. Let $\designmatrix \in \mathbb{R}^{N \times P}$ denote the design matrix and $y \in \mathbb{R}^N$ denote the response vector. Let $S$ denote the dropped set (i.e., the observations indexed by $S$ in the design matrix and response vector) and $\setminus S$ denote its complement. Let $\hat{\theta}^{\textrm{1sN}}(w)$ denote the one-step Newton approximation of $\hat{\theta}(w)$ given in \cref{eqn:1sN-score}.
\begin{equation}
\begin{aligned}
    \hat{\theta}^{\textrm{1sN}}(w) &= 
     (\designmatrix^\top \designmatrix)^{-1}\designmatrix^\top y + (\designmatrix_{\setminus S}^\top \designmatrix_{\setminus S})^{-1} (\designmatrix_S^\top y_S - \designmatrix_S^\top\designmatrix_S(\designmatrix^\top\designmatrix)^{-1}\designmatrix^\top y) \\ 
    &=(\designmatrix^\top \designmatrix)^{-1}\designmatrix^\top y - (\designmatrix_{\setminus S}^\top \designmatrix_{\setminus S})^{-1}(\designmatrix_{\setminus S}^\top y_{\setminus S} - \designmatrix_{\setminus S}^\top\designmatrix_{\setminus S}(\designmatrix^\top\designmatrix)^{-1}\designmatrix^\top y) \\
    &= (\designmatrix_{\setminus S}^\top \designmatrix_{\setminus S})^{-1}\designmatrix^\top_{\setminus S} y_{\setminus S}
\end{aligned}
\end{equation}
such that
\begin{align}
    \hat{\theta}(w) - \hat{\theta}^{\textrm{1sN}}(w) 
    = (\designmatrix_{\setminus S}^\top \designmatrix_{\setminus S})^{-1}\designmatrix^\top_{\setminus S} y_{\setminus S} - (\designmatrix_{\setminus S}^\top \designmatrix_{\setminus S})^{-1}\designmatrix^\top_{\setminus S} y_{\setminus S} = 0.
\end{align}
The one-step Newton approximation has not been proposed in the context of the Maximum Influence Perturbation problem because (unlike influence scores) the approximation is non-additive. This non-additivity precludes the fast solution of approximating the Most Influential Set by ranking and taking a sum of the top individual scores \citep{broderick2023automatic}. As a solution, we adapt ideas from the AMIP to consider an approximation that uses a sum of One-step Newton scores for leaving out individual data points (\cref{eqn:Add-1sN}). We call this approximation the \textit{Additive One-step Newton (Add-1sN)}.
\begin{align}
    \hat{\theta}^{\textrm{Add-1sN}}(w) = \hat{\theta}(1_N) + \sum_{n \in S} \Bigg( \Big(\sum_{\substack{n' =1, \\ n' \neq n}}^N \nabla^2 f(\hat{\theta}(1_N), d_{n'})\Big)^{-1} \nabla f(\hat{\theta}(1_N), d_n) \Bigg).
    \label{eqn:Add-1sN}
\end{align}
Add-1sN applies to general differentiable losses, though we continue to focus on a quantity of interest equal to a particular parameter value (so our approximation does not include a decision based on statistical significance). To the best of our knowledge, Add-1sN has not been previously proposed (beyond the OLS special case).

A promising direction for future research is to extend the Add-1sN to more general quantities of interest and $Z$-estimators. Ideally, such extensions would be automatic through autodiff, similar to the approach taken for the computation of the AMIP (see \citet{broderick2023automatic}). We anticipate that techniques developed for the AMIP can aid in extending the Add-1sN to more general quantities of interest.

\subsection{Analytic expressions for the error of additive approximations}
\label{sec:mechanisms-behind-failure-modes}
As the approximation methods presented in \cref{sec:currentmethods} are local approximations based on removing individual observations, errors may accrue when there exists subsets of points with high joint influence measures but low individual influence measures. 

In linear regression, we can formalize this intuition by looking at an analytic expression for the OLS estimator of the $p$th regression coefficient, $\theta$. Let $e_p \in \mathbb{R}^d$ denote the $p$th standard basis vector. Let $\designmatrix \in \mathbb{R}^{N \times P}$ denote the design matrix and $ y \in \mathbb{R}^N$ denote the response vector. Let $S$ denote the dropped set (i.e., the observations indexed by $S$ in the design matrix and response vector) and $\setminus S$ denote its complement. Let $\hat{\theta}^{\textrm{IF}}(w)$ denote the influence function approximation of $\hat{\theta}$ given in \cref{eqn:amip-approx} and let $\hat{\theta}^{\textrm{Add-1Exact}}(w)$ denote the approximation given in \cref{eqn:1sN-score}.

Let $\designmatrix_{-n}$ denote the design matrix leaving out data point $n$, $x_n \in \mathbb{R}^d$ denote the $x$ value of the $n$th data point, and $r_n = (y_n - \hat{\theta}(\onevec)^\top x_n)$ denote the residual value of the nth data point. 

The error incurred by AMIP can be written as 
\begin{equation}
\begin{aligned}
    \hat{\theta}^{\textrm{AMIP}}(\wvec) - \hat{\theta}(\wvec) &= \sum_{n \in S} e_p^\top (\designmatrix^\top \designmatrix)^{-1} x_n r_n - \sum_{n \in S} e_p^\top (\designmatrix_{\setminus S}^\top \designmatrix_{\setminus S})^{-1} x_n r_n\\
    &= e_p^\top ((\designmatrix^\top \designmatrix)^{-1} - (\designmatrix_{\setminus S}^\top \designmatrix_{\setminus S})^{-1}) \sum_{n \in S} x_n r_n\\
    &= e_p^\top ((\designmatrix^\top \designmatrix)^{-1} - (\designmatrix_{\setminus S}^\top \designmatrix_{\setminus S})^{-1}) \designmatrix_{S}^\top r_S
\end{aligned}
\label{eqn:amip-approx-least-squares}
\end{equation}

and the error by Additive One-Exact can be written as 
\begin{equation}
\begin{aligned}
    \hat{\theta}^{\textrm{Add-1Exact}}(\wvec) - \hat{\theta}(\wvec) 
    &= \sum_{n \in S} e_p^\top (\designmatrix_{-n}^\top \designmatrix_{-n})^{-1} x_n r_n - \sum_{n \in S} e_p^\top (\designmatrix_{\setminus S}^\top \designmatrix_{\setminus S})^{-1} x_n r_n \\
    &= \sum_{n \in S} e_p^\top ((\designmatrix_{-n}^\top \designmatrix_{-n})^{-1} - (\designmatrix_{\setminus S}^\top \designmatrix_{\setminus S})^{-1}) x_n r_n. \label{eqn:add-1sN-approx-least-squares}
\end{aligned}
\end{equation}
\subsection{Time complexity supplementals}
\label{sec:time-complexity-supplementals}
\textbf{AMIP:} The AMIP algorithm can be broken down into four steps: (a) run the data analysis on the full dataset, (b) compute the influence scores for each data point, (c) rank the influence scores, and (d) sum the top $\floor{\alpha N}$ scores. In the context of OLS, the cost of step (a) is $O(NP^2 + P^3)$. More generally, we can denote the cost of (a) as being $O(Analysis)$. The cost of (b), computing the influence score for $N$ data points, is $O(NP^2 + P^3)$. Recall that the influence score can be expressed as a Hessian-vector product (see \cref{eqn:theta-derivative}). The Hessian is a matrix of dimension $P \times P$. For $N$ data points, the cost of computing the Hessian is $O(NP^2)$. To invert the Hessian (which is done once in the computation) costs $O(P^3)$. The gradient is a vector of dimension $P$. To compute the gradient for $N$ data points costs $O(NP)$. To multiply the Hessian by the gradient, for $N$ data points, costs $O(NP^2)$. \revision{Step (c), finding the top $\floor{\alpha N}$ influence scores, costs $O(N \log \alpha N)$.\footnote{In the limit, $\floor{\alpha N}$ is equivalent to $\alpha N$.}} Step (d), the summing of top scores, costs $O(N)$.

In general, the overall cost of running AMIP is \revision{$O(Analysis + N \log(\alpha N) + NP^2 + P^3)$}. For OLS with an effect size quantity of interest, the cost of running OLS, $O(Analysis)$, is $O(NP^2 + P^3)$. Hence, the cost becomes \revision{$O(N \log(\alpha N) + NP^2 + P^3)$}.

\textbf{Additive One-Exact:} The Additive One-Exact algorithm can be broken down into four steps: (a) run the data analysis on the full dataset, (b) compute the One-exact scores (the exact impact of dropping an individual data point) for each point, (c) rank the One-exact scores, and (d) sum the top $\floor{\alpha N}$ scores. The only difference between running this algorithm and running AMIP is in step (b). In the general data analysis setting, the computation of One-Exact scores involves the re-running of data analyses upon dropping each individual point in a dataset, a cost that is $O(N \times Analysis)$. In the setting of OLS, we can take advantage of the One-step Newton update in place of re-running the analysis $N$ times (see \cref{sec:additive-one-step-newton} for more details). Using this rank-one update, the cost of computing One-Exact scores for $N$ data points becomes $O(NP^3 + P^3)$, or simply $O(NP^3)$. Notice that the cost differs from computing influence scores by an extra factor of $P$ (recall that the cost of computing influence scores for AMIP is $O(NP^2 + P^3)$). The improved precision of One-Exact scores over influence scores comes at the cost of this additional factor of $P$; specifically, for Additive One-Exact, the Hessian matrix is reweighted to account for each dropped data point (see \cref{eqn:Add-1sN} in \cref{sec:additive-one-step-newton} for the equation for this approximation) while, for AMIP, the reweighting is omitted.

In general, the overall cost of running Additive One-Exact is \revision{$O(N \times Analysis + N \log(\alpha N))$}, and the cost specific to OLS with an effect size quantity of interest is \revision{$O(N \log (\alpha N) + NP^3)$}.

\textbf{Greedy AMIP:} The Greedy AMIP algorithm can be broken down into \revision{three steps, iterated over $\floor{\alpha N}$ times: a) approximate the change (to the quantity of interest) upon dropping each data point individually using an influence function approximation, (b) select the point that results in the biggest approximated change when dropped, and (c) re-run the data analysis. Recall that computing the influence scores for $N$ data points costs $O(NP^2 + P^3)$; iterated for $\floor{\alpha N}$ times, step (a) costs $O(\alpha N^2P^2 + \alpha N P^3)$. To find the top influence score $\floor{\alpha N}$ times, step (b) costs $O(\alpha N^2)$. Finally, to re-run the data analysis $\floor{\alpha N}$ times with the point dropped, step (c) costs $O(\alpha N \times Analysis)$.}
In general, running Greedy AMIP costs $O(\alpha N \times Analysis + \alpha N^2P^2 + \alpha NP^3)$. For OLS with an effect size quantity of interest, the cost of running OLS, $O(Analysis)$, is $O(NP^2 + P^3)$. Thus, the cost of running Greedy AMIP for OLS with an effect size quantity of interest is $O(\alpha N^2 P^2 + \alpha N P^3)$. 

\textbf{Greedy One-Exact:} The Greedy One-Exact algorithm can be broken down into \revision{three steps, iterated over $\floor{\alpha N}$ times: a) compute the exact change (to the quantity of interest) upon dropping each data point individually, (b) select the point that results in the biggest change when dropped, and (c) re-run the data analysis. In the general data analysis setting, when iterated for $\floor{\alpha N}$ times, computing the exact changes for leaving out each point individually, step (a) costs $O(\alpha N^2 \times Analysis)$. To find the top score costs $O(N)$. Repeated over $\floor{\alpha N}$ times, step (b) costs $O(\alpha N^2)$. To re-run the data analysis after dropping the top point, step (c) costs $O(\alpha N \times Analysis)$.}

Thus, the general overall cost of running Greedy One-Exact is $O(\alpha N^2 \times Analysis)$. For OLS with an effect size quantity of interest, this cost reduces to $O(\alpha N^2 P^3)$.

\subsection{\revision{Asymptotic Equivalence of \(\floor{\alpha N}\) and \(\alpha N\)}}
\label{sec:floor-function}
\revision{The floor function introduces discrete rounding effects that are negligible in the asymptotic regime. We formalize this intuition below.}

\revision{We begin with the inequality:
\begin{equation}
\begin{aligned}
    \alpha N - 1 \leq \lfloor \alpha N \rfloor \leq \alpha N.
\end{aligned}
\end{equation}}

\revision{This inequality holds for all \(\alpha \in (0,1)\) and all \(N \in \mathbb{N}\). To simplify the comparison in big-\(O\) notation, we seek multiplicative bounds.}

\revision{Observe that for \(N > \frac{2}{\alpha}\), we have that
\begin{equation}
\begin{aligned}
    \alpha N - 1 > \frac{1}{2} \alpha N,
\end{aligned}
\end{equation}
which implies,
\begin{equation}
\begin{aligned}
    \frac{1}{2} \alpha N \leq \lfloor \alpha N \rfloor \leq \alpha N.
\end{aligned}
\end{equation}}
\revision{Hence, for sufficiently large \(N\), the floor term \(\lfloor \alpha N \rfloor\) is sandwiched between two constants times \(\alpha N\), so we conclude:
\begin{equation}
\begin{aligned}
    \lfloor \alpha N \rfloor = O(\alpha N) \quad \text{and}  \quad \alpha N = O(\lfloor \alpha N \rfloor).
\end{aligned}
\end{equation}
Thus, \(\lfloor \alpha N \rfloor\) and \(\alpha N\) are asymptotically equivalent up to constant factors.}

\section{Failure modes supplementals}
\subsection{Tables demonstrating results of approximation methods}
\label{sec:fm-tables}

\cref{table:one-outlier-table} shows that, in the One Outlier example, AMIP fails both with and without re-run. While there exists one point (in black) that can change the sign of the regression coefficient, the method reports that no subsets of size one exist that can change the sign (a failure without re-run). Upon removal of the point suggested by AMIP (which is a red point) and refitting the model, we still do not see a change in sign (a failure with re-run). Greedy AMIP also faces a failure with re-run because the sign does not change upon refitting after we remove the point identified by the algorithm. In this example, Additive One-Exact and Greedy One-Exact succeed. Of the OLS-specific algorithms, NetApprox and FH-Gurobi (warm-start) also succeed, while integral FH-Gurobi (without warm-start) fails to identify a subset of size one that can change the sign.

\cref{table:simpsons-paradox-table} shows that in the Simpson's Paradox example, AMIP and Additive One-Exact fail both with and without re-run. While there exists a group of ten points ($\alpha = 0.01$) (specifically, the group of points in black) such that, upon removal, the sign of the regression coefficient changes from positive to negative, both AMIP and Additive One-Exact report that no such subset of this size or smaller exists. The sign also does not change upon refitting, after removing the points identified by the algorithms. In this example, the greedy versions of both approximations succeed. Of the mathematical programming algorithms, NetApprox and FH-Gurobi (warm-start) also succeed, while integral FH-Gurobi (without warm-start) fails to identify a subset of size 10 that can change the sign.

Similar to the Simpson's Paradox example, \cref{table:mcl-table} shows that in the Poor Conditioning example, AMIP and Additive One-Exact fail both with and without re-run, while the greedy versions of both approximations succeed. Again, of the mathematical programming algorithms, NetApprox and FH-Gurobi (warm-start) succeed while integral FH-Gurobi (without warm-start) fails to identify a subset of size 10 that can change the sign.

\revision{\cref{table:gene-table} through \cref{table:forestry-table} display the performance of the data-dropping approximations on real-world data sets.}

\FloatBarrier
\begin{table}[ht]
  \centering
  \caption{Performance of methods under the One Outlier example, where an outlier is placed at (X, Y) = (1e6, 1e6). We know that there exists a subset (one data point!) such that, upon removal, the sign of the regression coefficient changes from positive (1.000) to negative (-1.000). Hence, $\alpha=\frac{1}{N}$ is sufficient to lead to a failure mode. The ``Predicted Estimate'' column shows the estimate predicted by the approximation algorithm, and the ``Refit Estimate'' column shows the result of refitting the model after removing the approximate Most Influential Subset specified by the algorithm. The ``Points Dropped'' column shows the number of red (R) and black (B) points that the algorithm drops. The values highlighted in green indicate that the algorithm succeeded. Non-highlighted values under ``Predicted Estimate'' indicate a failure without re-run, while non-highlighted values under ``Refit Estimate'' indicate a failure with re-run. 
  }
  \begin{tabularx}{\columnwidth}{XXXX} 
    \toprule  
    Method & Predicted Estimate & Refit Estimate & Points Dropped  \\
    \midrule  
    Removing Population A & --- & -1.000 &  (R: 0, B: 1) \\
    AMIP & 0.999 & 0.999 & (R: 1, B: 0) \\
    Additive One-Exact & \colorbox{green}{-1.000} & \colorbox{green}{-1.000} & \colorbox{green}{(R: 0, B: 1)} \\
    Greedy AMIP & --- & 0.999 & (R: 1, B: 0) \\
    Greedy One-Exact & --- & \colorbox{green}{-1.000} & \colorbox{green}{(R: 0, B: 1)} \\
    NetApprox & --- & \colorbox{green}{-1.000} & \colorbox{green}{(R: 0, B: 1)} \\
    FH-Gurobi & --- & --- & (R: 62, B: 1) \\ 
    FH-Gurobi (ws) & --- & \colorbox{green}{-1.000} & \colorbox{green}{(R: 0, B: 1)} \\
    \bottomrule 
  \end{tabularx}
  \label{table:one-outlier-table}
\end{table}
\FloatBarrier

\FloatBarrier
\begin{table}[ht]
  \centering
  \caption{Performance of methods under the Simpson's Paradox example. We know that there exists a subset (namely, the 10 points in Population A ($\alpha = 0.01$) in \cref{fig:plots-of-examples}) such that, upon removal, the sign of the regression coefficient changes from positive (0.586) to negative (-0.990). The ``Predicted Estimate'' column shows the estimate predicted by the approximation algorithm, and the ``Refit Estimate'' column shows the result of refitting the model after removing the approximate Most Influential Subset specified by the algorithm. The ``Points Dropped'' column shows the number of red (R) and black (B) points that the algorithm drops. The values highlighted in green indicate that the algorithm succeeded. Non-highlighted values under ``Predicted Estimate'' indicate a failure without re-run, while non-highlighted values under ``Refit Estimate'' indicate a failure with re-run.}
  \begin{tabularx}{\columnwidth}{XXXX} 
    \toprule 
    Method & Predicted Estimate & Refit Estimate & Points Dropped \\
    \midrule 
    Removing Population A & --- & -0.990 & (R: 0, B: 10) \\
    AMIP & 0.462 & 0.279 & (R: 2, B: 8) \\
    Additive One-Exact & 0.456 & 0.279 & (R: 2, B: 8) \\
    Greedy AMIP & --- & \colorbox{green}{-0.990} & \colorbox{green}{(R: 0, B: 10)} \\
    Greedy One-Exact & --- & \colorbox{green}{-0.990} & \colorbox{green}{(R: 0, B: 10)} \\
    NetApprox & --- & \colorbox{green}{-0.990} & \colorbox{green}{(R: 0, B: 10)} \\
    FH-Gurobi & --- & --- & (R: 112, B: 10) \\ 
    FH-Gurobi (ws) & --- & \colorbox{green}{-0.990} & \colorbox{green}{(R: 0, B: 10)} \\
    \bottomrule 
  \end{tabularx}
  \label{table:simpsons-paradox-table}
\end{table}
\FloatBarrier

\FloatBarrier
\begin{table}[ht]
  \centering
  \caption{Performance of methods under the Poor Conditioning example. We know that there exists a subset (namely, the 10 points in Population A ($\alpha = 0.01$) in \cref{fig:plots-of-examples}) such that, upon removal, the sign of the regression coefficient changes from positive (8.452) to negative (-1.049). The ``Predicted Estimate'' column shows the estimate predicted by the approximation algorithm, and the ``Refit Estimate'' column shows the result of refitting the model after removing the approximate Most Influential Subset specified by the algorithm. The ``Points Dropped'' column shows the number of red (R) and black (B) points that the algorithm drops. The values highlighted in green indicate that the algorithm succeeded. Non-highlighted values under ``Predicted Estimate'' indicate a failure of type (i), while non-highlighted values under ``Refit Estimate'' indicate a failure of type (ii).}
  \begin{tabularx}{\columnwidth}{XXXX} 
    \toprule  
    Method & Predicted Estimate & Refit Estimate & Indices Dropped \\
    \midrule  
    Removing Population A & --- & -1.049 & (R: 0, B: 10) \\
    AMIP & 6.724 & 5.376 & (R:5, B:5) \\
    Additive One-Exact & 6.667 & 5.376 & (R: 3, B: 7) \\
    Greedy AMIP & --- & \colorbox{green}{-1.049} & \colorbox{green}{(R: 0, B: 10)} \\
    Greedy One-Exact & --- & \colorbox{green}{-1.049} & \colorbox{green}{(R: 0, B: 10)} \\
    NetApprox & --- & \colorbox{green}{-1.049} & \colorbox{green}{(R: 0, B: 10)} \\
    FH-Gurobi & --- & --- & (R: 991, B: 10) \\ 
    FH-Gurobi (ws) & --- & \colorbox{green}{-1.049} & \colorbox{green}{(R: 0, B: 10)} \\
    \bottomrule 
  \end{tabularx}
  \label{table:mcl-table}
\end{table}
\FloatBarrier
\FloatBarrier
\begin{table}[ht]
  \centering 
  \caption{\revision{Performance of methods on the mouse brain single-cell analysis data set, where $\alpha=1\%$ is sufficient to lead to a failure mode. The ``Predicted Estimate'' column shows the estimate predicted by the approximation algorithm, and the ``Refit Estimate'' column shows the result of refitting the model after removing the approximate Most Influential Subset specified by the algorithm. The ``Number Dropped'' column shows the number of points that the algorithm drops. The values highlighted in green indicate that the algorithm succeeded. Non-highlighted values under ``Predicted Estimate'' indicate a failure of type (i), while non-highlighted values under ``Refit Estimate'' indicate a failure of type (ii).}}
  \begin{tabularx}{\columnwidth}{XXXX} 
    \toprule  
    Method & Predicted Estimate & Refit Estimate & Number Dropped  \\
    \midrule  
    AMIP & 0.451 & 0.355 & 656 \\
    Additive 1Exact & 0.4511 & 0.355 & 656 \\
    Greedy AMIP & --- & \colorbox{green}{-0.003} & 172 \\
    Greedy 1Exact & --- & \colorbox{green}{-0.003} & 172 \\
    NetApprox & --- & \colorbox{green}{-0.002} & 656 \\
    FH-Gurobi & --- & 0.000 & 1712 \\
    FH-Gurobi (ws) & --- & 0.002 & 172 \\
    \bottomrule 
  \end{tabularx}
  \label{table:gene-table}
\end{table}
\FloatBarrier

\FloatBarrier
\begin{table}[ht]
  \centering
  \caption{\revision{Performance of methods on the Ames Housing data set, where $\alpha=1\%$ is sufficient to lead to a failure mode. The ``Predicted Estimate'' column shows the estimate predicted by the approximation algorithm, and the ``Refit Estimate'' column shows the result of refitting the model after removing the approximate Most Influential Subset specified by the algorithm. The ``Number Dropped'' column shows the number of points that the algorithm drops. The values highlighted in green indicate that the algorithm succeeded. Non-highlighted values under ``Predicted Estimate'' indicate a failure of type (i), while non-highlighted values under ``Refit Estimate'' indicate a failure of type (ii).}}
  \begin{tabularx}{\columnwidth}{XXXX} 
    \toprule
    Method & Predicted Estimate & Refit Estimate & Number Dropped  \\
    \midrule  
    AMIP & 67.926 & 39.758 & 3 \\
    Additive 1Exact & 39.758 & 0.355 & 3 \\
    Greedy AMIP & --- & \colorbox{green}{-6.669} & 3 \\
    Greedy 1Exact & --- & \colorbox{green}{-6.669} & 3 \\
    NetApprox & --- & \colorbox{green}{-6.669} & 3 \\
    FH-Gurobi & --- & \colorbox{green}{-6.669} & 3 \\
    FH-Gurobi (ws) & --- & \colorbox{green}{-6.669} & 3 \\
    \bottomrule 
  \end{tabularx}
  \label{table:house-table}
\end{table}
\FloatBarrier

\FloatBarrier
\begin{table}[ht]
  \centering
  \caption{\revision{Performance of methods on the bird morphometrics data set, where $\alpha=1\%$ is sufficient to lead to a failure mode. The ``Predicted Estimate'' column shows the estimate predicted by the approximation algorithm, and the ``Refit Estimate'' column shows the result of refitting the model after removing the approximate Most Influential Subset specified by the algorithm. The ``Number Dropped'' column shows the number of points that the algorithm drops. The values highlighted in green indicate that the algorithm succeeded. Non-highlighted values under ``Predicted Estimate'' indicate a failure of type (i), while non-highlighted values under ``Refit Estimate'' indicate a failure of type (ii).}}
  \begin{tabularx}{\columnwidth}{XXXX} 
    \toprule  
    Method & Predicted Estimate & Refit Estimate & Number Dropped  \\
    \midrule  
    AMIP & -0.401 & \colorbox{green}{0.399} & 1 \\
    Additive 1Exact & 0.399 & \colorbox{green}{0.399} & 1 \\
    Greedy AMIP & --- & \colorbox{green}{0.399} & 1 \\
    Greedy 1Exact & --- & \colorbox{green}{0.399} & 1 \\
    NetApprox & --- & -0.719 & 1 \\
    FH-Gurobi & --- & 0.410 & 4 \\
    FH-Gurobi (ws) & --- & 0.554 & 4 \\
    \bottomrule 
  \end{tabularx}
  \label{table:bird-table}
\end{table}
\FloatBarrier

\FloatBarrier
\begin{table}[ht]
  \centering
  \caption{\revision{Performance of methods on the photosynthesis measurements data set, where all methods succeed with refit at $\alpha=0.0036$. The ``Predicted Estimate'' column shows the estimate predicted by the approximation algorithm, and the ``Refit Estimate'' column shows the result of refitting the model after removing the approximate Most Influential Subset specified by the algorithm. The ``Number Dropped'' column shows the number of points that the algorithm drops. The values highlighted in green indicate that the algorithm succeeded. Non-highlighted values under ``Predicted Estimate'' indicate a failure of type (i), while non-highlighted values under ``Refit Estimate'' indicate a failure of type (ii).}}
  \begin{tabularx}{\columnwidth}{XXXX} 
    \toprule  
    Method & Predicted Estimate & Refit Estimate & Number Dropped  \\
    \midrule  
    AMIP & -0.0117 & \colorbox{green}{-0.0038} & 2 \\
    Additive 1Exact & -0.0129 & \colorbox{green}{-0.00377} & 2 \\
    Greedy AMIP & --- & \colorbox{green}{-0.00377} & 2 \\
    Greedy 1Exact & --- & \colorbox{green}{-0.00377} & 2 \\
    NetApprox & --- & \colorbox{green}{-0.00377} & 2 \\
    FH-Gurobi & --- & \colorbox{green}{-0.00377} & 2 \\
    FH-Gurobi (ws) & --- & \colorbox{green}{-0.00377} & 2 \\
    \bottomrule 
  \end{tabularx}
  \label{table:plant-table}
\end{table}
\FloatBarrier

\FloatBarrier
\begin{table}[ht]
  \centering
  \caption{\revision{Performance of methods on the forestry data set, where all methods succeed with refit at $\alpha=1\%$. The ``Predicted Estimate'' column shows the estimate predicted by the approximation algorithm, and the ``Refit Estimate'' column shows the result of refitting the model after removing the approximate Most Influential Subset specified by the algorithm. The ``Number Dropped'' column shows the number of points that the algorithm drops. The values highlighted in green indicate that the algorithm succeeded. Non-highlighted values under ``Predicted Estimate'' indicate a failure of type (i), while non-highlighted values under ``Refit Estimate'' indicate a failure of type (ii).}}
  \begin{tabularx}{\columnwidth}{XXXX} 
    \toprule  
    Method & Predicted Estimate & Refit Estimate & Number Dropped  \\
    \midrule  
    AMIP & 0.0167 & \colorbox{green}{-0.0115} & 1 \\
    Additive 1Exact & -0.0115 & \colorbox{green}{-0.0115} & 1 \\
    Greedy AMIP & --- & \colorbox{green}{-0.0115} & 1 \\
    Greedy 1Exact & --- & \colorbox{green}{-0.0115} & 1 \\
    NetApprox & --- & \colorbox{green}{-0.0115} & 1 \\
    FH-Gurobi & --- & \colorbox{green}{-0.0115} & 1 \\
    FH-Gurobi (ws) & --- & \colorbox{green}{-0.0115} & 1 \\
    \bottomrule 
  \end{tabularx}
  \label{table:forestry-table}
\end{table}
\FloatBarrier
\subsection{\revision{Failure modes without an intercept term}}
\label{sec:fm-no-intercept}
\revision{In this section, we find that fitting without the intercept does not significantly affect the numerical results in \cref{sec:fm} and that most of the same failure modes still hold.}

\revision{The one exception to this observation is that, with an intercept term, AMIP with re-run (which is the same as Greedy-AMIP for dropping one data point) fails in the one-outlier example with an intercept term, but without an intercept term, it succeeds. Here, we note that for the case when $P=1$ and no intercept term, the limiting expression for the arbitrary non-outlier point discussed in \cref{prop:oneoutliertypetwo} (\cref{eq:influence-score-point-two}) is always $0$ because $e_p$ and $v$ are collinear by design. In more than $1$ dimension, however, this collinearity need no longer hold, and so \cref{eq:influence-score-point-two} may indeed converge to a non-zero constant, which may again result in failure modes with re-run without an intercept term.} 

\revision{\textbf{One-Outlier example.} Upon fitting OLS without an intercept, the coefficient fit on the full dataset is $1.000$. The fit with the black-dot points removed is $-1.033$. The removal of the intercept has negligible effects on the numerical results of the algorithms, with the exception of AMIP/Greedy-AMIP ``Refit Estimate'' (which was $0.999$ with an intercept and is $-1.033$ without an intercept) (\cref{table:one-outlier-table}).}
\FloatBarrier
\begin{table}[ht]
  \centering
  \caption{\revision{Performance of methods under the One Outlier example, where an outlier is placed at (X, Y) = (1e6, 1e6). We know that there exists a subset (one data point!) such that, upon removal, the sign of the regression coefficient changes from positive ($1.000$) to negative ($-1.033$). Hence, $\alpha=\frac{1}{N}$ is sufficient to lead to a failure mode. The ``Predicted Estimate'' column shows the estimate predicted by the approximation algorithm, and the ``Refit Estimate'' column shows the result of refitting the model after removing the approximate Most Influential Subset specified by the algorithm. The ``Points Dropped'' column shows the number of red (R) and black (B) points that the algorithm drops. The values highlighted in green indicate that the algorithm succeeded. Non-highlighted values under ``Predicted Estimate'' indicate a failure without re-run, while non-highlighted values under ``Refit Estimate'' indicate a failure with re-run.}
  }
  \begin{tabularx}{\columnwidth}{XXXX} 
    \toprule  
    Method & Predicted Estimate & Refit Estimate & Points Dropped  \\
    \midrule 
    Removing Population A & --- & -1.033 &  (R: 0, B: 1) \\
    AMIP & 1.000 & \colorbox{green}{-1.033} & \colorbox{green}{(R: 0, B: 1)} \\
    Additive One-Exact & \colorbox{green}{-1.033} & \colorbox{green}{-1.033} & \colorbox{green}{(R: 0, B: 1)} \\
    Greedy AMIP & --- & \colorbox{green}{-1.033} & \colorbox{green}{(R: 0, B: 1)} \\
    Greedy One-Exact & --- & \colorbox{green}{-1.000} & \colorbox{green}{(R: 0, B: 1)} \\
    NetApprox & --- & \colorbox{green}{-1.033} & \colorbox{green}{(R: 0, B: 1)} \\
    FH-Gurobi & --- & \colorbox{green}{-1.033} & \colorbox{green}{(R: 0, B: 1)} \\
    FH-Gurobi (ws) & --- & \colorbox{green}{-1.033} & \colorbox{green}{(R: 0, B: 1)} \\
    \bottomrule 
  \end{tabularx}
  \label{table:one-outlier-no-intercept-table}
\end{table}
\FloatBarrier

\revision{\textbf{Simpon's Paradox.}}
\revision{When we do not include an intercept, the coefficient fit on the full dataset is $0.516$. The fit with the black-dot points removed is $-0.990$. The removal of the intercept has negligible effects on the numerical results of the additive and greedy algorithms (\cref{table:simpsons-paradox-table}). However, FH-Gurobi (warm-start) now presents an additional failure.}
\FloatBarrier
\begin{table}[ht]
  \centering
  \caption{\revision{Performance of methods under the Simpson's Paradox example. We know that there exists a subset (namely, the 10 points in Population A ($\alpha = 0.01$) in \cref{fig:plots-of-examples}) such that, upon removal, the sign of the regression coefficient changes from positive ($0.516$) to negative ($-0.990$). The ``Predicted Estimate'' column shows the estimate predicted by the approximation algorithm, and the ``Refit Estimate'' column shows the result of refitting the model after removing the approximate Most Influential Subset specified by the algorithm. The ``Points Dropped'' column shows the number of red (R) and black (B) points that the algorithm drops. The values highlighted in green indicate that the algorithm succeeded. Non-highlighted values under ``Predicted Estimate'' indicate a failure without re-run, while non-highlighted values under ``Refit Estimate'' indicate a failure with re-run.}}
  \begin{tabularx}{\columnwidth}{XXXX} 
    \toprule 
    Method & Predicted Estimate & Refit Estimate & Points Dropped \\
    \midrule 
    Removing Population A & --- & -0.990 & (R: 0, B: 10) \\
    AMIP & 0.462 & 0.278 & (R: 2, B: 8) \\
    Additive One-Exact & 0.455 & 0.278 & (R: 2, B: 8) \\
    Greedy AMIP & --- & \colorbox{green}{-0.990} & \colorbox{green}{(R: 0, B: 10)} \\
    Greedy One-Exact & --- & \colorbox{green}{-0.990} & \colorbox{green}{(R: 0, B: 10)} \\
    NetApprox & --- & \colorbox{green}{-0.990} & \colorbox{green}{(R: 0, B: 10)} \\
    FH-Gurobi & --- & --- & (R: 810, B: 10) \\
    FH-Gurobi (ws) & --- & 0.514 & (R: 0, B: 1) \\
    \bottomrule 
  \end{tabularx}
  \label{table:simpsons-paradox-no-intercept-table}
\end{table}
\FloatBarrier

\revision{\textbf{Poor Conditioning.} When we do not include an intercept, the coefficient fit on the full dataset is $7.405$. The fit with the black-dot points removed is $-1.042$. The removal of the intercept has negligible effects on all numerical results (\cref{table:mcl-table}).}
\FloatBarrier
\begin{table}[ht]
  \centering
  \caption{\revision{Performance of methods under the Poor Conditioning example. We know that there exists a subset (namely, the 10 points in Population A ($\alpha = 0.01$) in \cref{fig:plots-of-examples}) such that, upon removal, the sign of the regression coefficient changes from positive (8.452) to negative (-1.049). The ``Predicted Estimate'' column shows the estimate predicted by the approximation algorithm, and the ``Refit Estimate'' column shows the result of refitting the model after removing the approximate Most Influential Subset specified by the algorithm. The ``Points Dropped'' column shows the number of red (R) and black (B) points that the algorithm drops. The values highlighted in green indicate that the algorithm succeeded. Non-highlighted values under ``Predicted Estimate'' indicate a failure of type (i), while non-highlighted values under ``Refit Estimate'' indicate a failure of type (ii).}}
  \begin{tabularx}{\columnwidth}{XXXX} 
    \toprule  
    Method & Predicted Estimate & Refit Estimate & Indices Dropped \\
    \midrule 
    Removing Population A & --- & -1.049 & (R: 0, B: 10) \\
    AMIP & 6.616 & 5.376 & (R:3, B:7) \\
    Additive One-Exact & 6.548 & 5.376 & (R: 3, B: 7) \\
    Greedy AMIP & --- & \colorbox{green}{-1.049} & \colorbox{green}{(R: 0, B: 10)} \\
    Greedy One-Exact & --- & \colorbox{green}{-1.042} & \colorbox{green}{(R: 0, B: 10)} \\
    NetApprox & --- & \colorbox{green}{-1.049} & \colorbox{green}{(R: 0, B: 10)} \\
    FH-Gurobi & --- & --- & (R: 991, B: 10) \\ %
    FH-Gurobi (ws) & --- & \colorbox{green}{-1.049} & \colorbox{green}{(R: 0, B: 10)} \\
    \bottomrule 
  \end{tabularx}
  \label{table:mcl-no-intercept-table}
\end{table}
\FloatBarrier
\subsection{\revision{Failure with and without re-run for FH-Gurobi}}
\label{sec:fh-gurobi-failure-with-rerun}
\revision{In this section, we discuss the distinction between failure with and without re-run for FH-Gurobi. Technically, the FH-Gurobi algorithms do not always require the user to re-run their data analysis with the suggested points dropped. However, the two failure modes are still equivalent for these approximations. To see the equivalence, assume that the data are non-robust to data dropping---specifically, that there exists some set of size $\floor{\alpha N}$ that we can drop to change conclusions. If the approximation returns a set of size no greater than $\floor{\alpha N}$, then approximation calls for users to re-run the data analysis with the identified subset dropped, in order to determine whether conclusion (e.g., the sign of an effect size) indeed changes. Thus, the two failure modes are equivalent in this setting. If the method returns a set of size greater than $\floor{\alpha N}$, then a failure has occurred because no set of size at most $\floor{\alpha N}$ was found, so the user does not have to re-run their data analysis. If they had re-run their analysis without this subset, the set is still greater than size $\floor{\alpha N}$, which still implies that the approximation failed. Thus, both failure types are again equivalent.}
\subsection{\revision{Successful examples in real-world data}}
\label{sec:successes-in-real-data}
\revision{To get a sense of examples in which data-dropping approximations do succeed in the real world, we provide some real-world data sets that are non-robust to small-fraction data dropping, yet for which all methods succeed.}
\revision{\subsubsection{Plants}}
\label{sec:plants}
\begin{figure}
    \centering
    \includegraphics[width=0.5\linewidth]{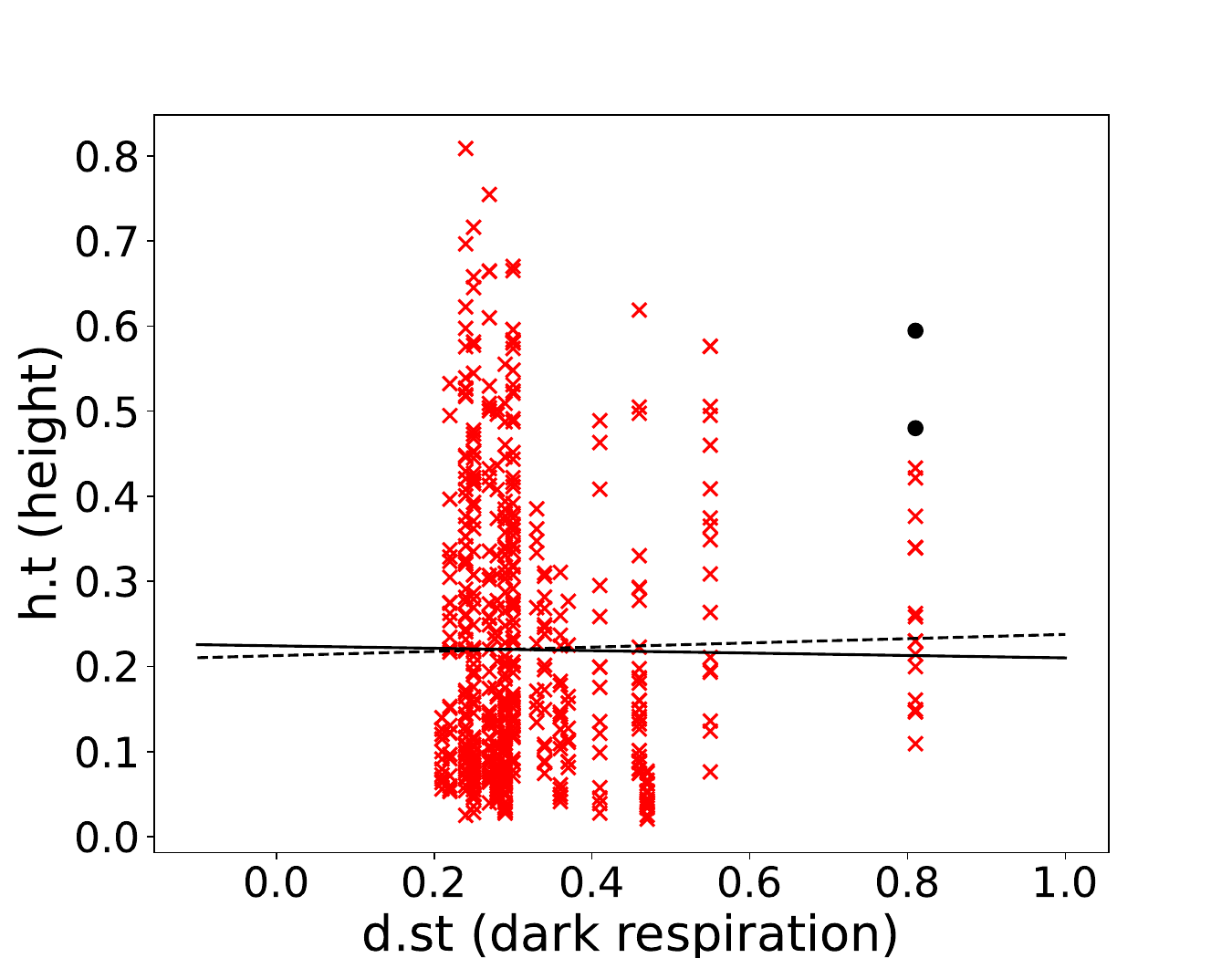}
    \caption{\revision{Plant photosynthesis data. Dashed line indicates the fit to the full data while solid line indicates the fit with the 2 black-dot points removed.}}
    \label{fig:plant-plot}
\end{figure}

\revision{\textbf{Setup.} This data set is taken from an ecological study on the plastic phenotypic response to light of shrubs from a Panamanian rainforest \cite{valladares2000plastic}. We consider a 1D linear regression (with intercept) of height on dark respiration.}

\revision{\textbf{Experimental results.} We know that there exists a subset of size $2$ points ($0.36\%$ of the data) that we can drop to change the sign of the regression coefficient from positive ($0.0249$) to negative ($-0.014$). In this example, all methods succeed at identifying the two points that, when dropped, change the sign. Thus, all methods succeed at $\alpha = 1\%.$}
\revision{\subsubsection{Forestry}}
\label{sec:forestry}
\begin{figure}
    \centering
    \includegraphics[width=0.45\linewidth]{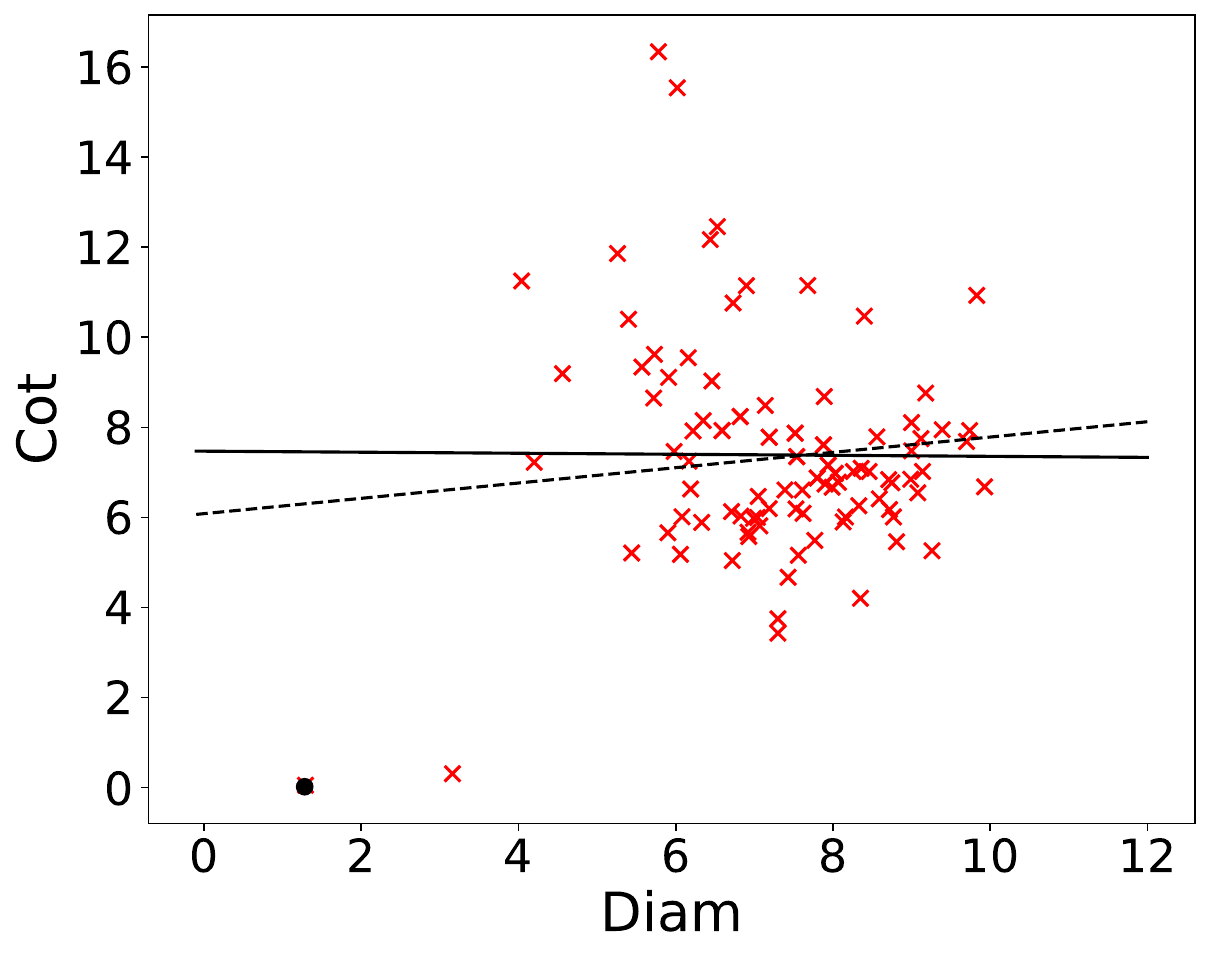}
    \caption{\revision{Forestry data. Dashed line indicates the fit to the full data while solid line indicates the fit with the 1 black-dot point removed. Though it is not visually apparent from the plot, we note here that the black-dot point overlays one other red-cross point (with slightly different $x$ and $y$ values).}}
    \label{fig:forestry-plot}
\end{figure}

\revision{\textbf{Setup.} This data set is taken from a database recording various characteristics of woody plants (age, leaf size, leaf mass per area, wood density, nitrogen content of leaves and wood), as well as information about their growing environment (location, light, experimental treatment, vegetation type) \citep{a2016baad}. To ensure that $1$ data point is around $1\%$ (this data set has a total of $92$ points with non-missing values), we augmented the data by repeating the $9$ red-cross points closest in euclidean distance to the mean of the red-cross points. We consider a 1D linear regression (with intercept) of the two variables Cot (plausibly a component of tree biomass or a specific measurement related to tree structure\footnote{variable definition not explicitly stated in the paper}) on Diam (stem diameter).}

\revision{\textbf{Experimental results.}
We know that there exists $1$ point in $101$ ($0.99\%$ of the data) that we can drop to change the sign of the regression coefficient from positive ($0.170$) to negative ($-0.0115$). In this example, all methods succeed at identifying the one black-dot point that, when dropped, change the sign. Thus, all methods succeed at $\alpha = 1\%.$}
\subsection{On the relationship between leverage scores and failures of additive approximations}
\label{sec:leverage-scores-reveal-non-additivity}
A common theme across the multi-outlier failure modes presented in \cref{sec:multi-outlier-examples} is that the leverage scores of the outlier points are extremely large (see \cref{fig:residleverage_plots}).
\FloatBarrier
\begin{figure*}[ht]
    \centering
    \includegraphics[width=\textwidth, trim={0 0cm 0 0cm}, clip]{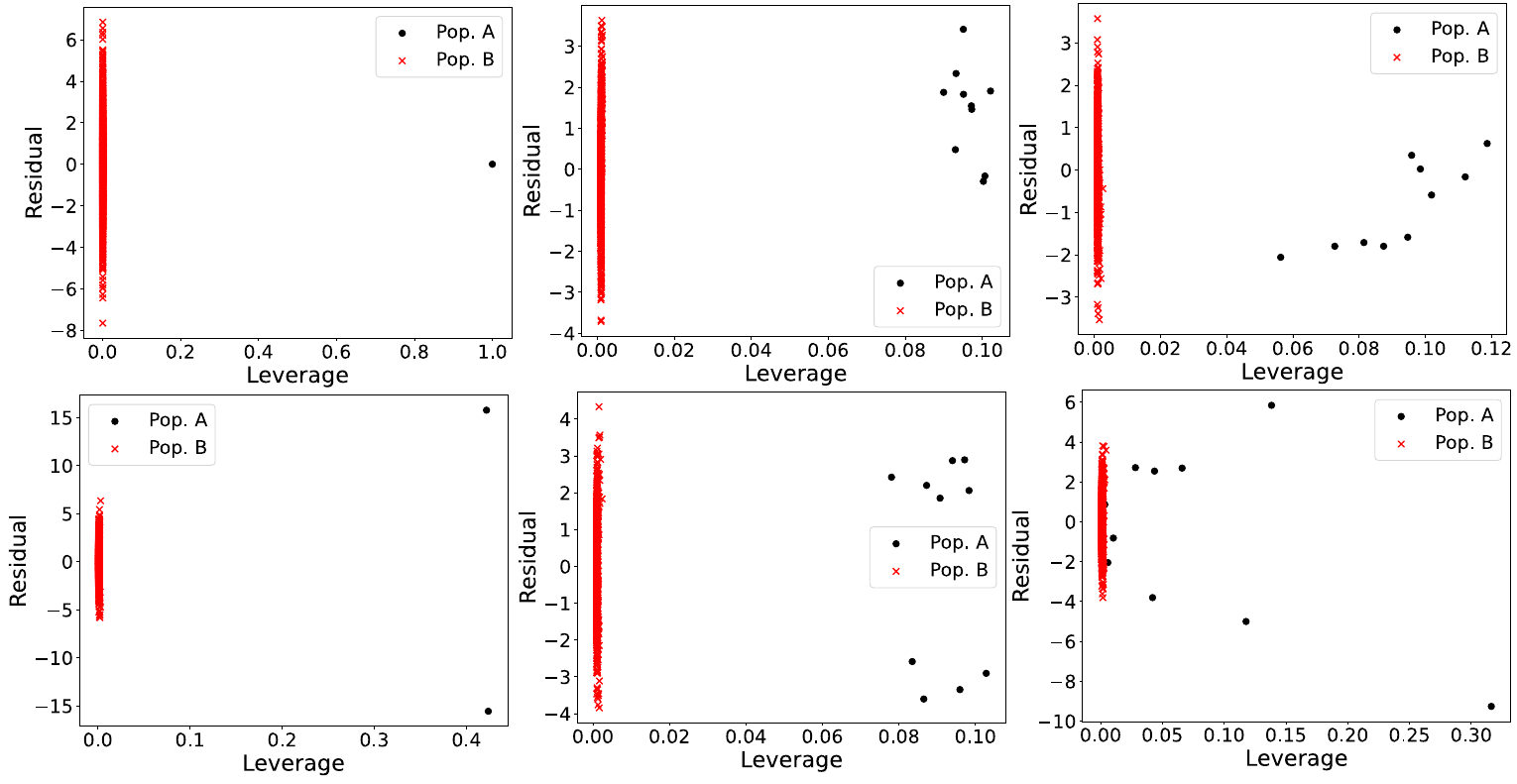}
    \caption{Plots of Residual vs. Leverage: one-outlier (top left), Simpson's paradox (top middle), poor conditioning (top right), two outliers (bottom left), two-outlier groups (bottom middle), two populations (bottom right). In all surfaced failure modes, the leverage values of each black-dot point is larger than the leverage values of each red-cross point.}
    \label{fig:residleverage_plots}
\end{figure*}
\FloatBarrier
Data-dropping is, in general, non-additive \citep{belsley2005regression, gray1984k}. This fact becomes more apparent when the data points being dropped have high leverage scores. Data points with high leverage scores may interact with other data points in highly non-linear ways, leading to pronounced non-additivity in data-dropping \citep{gray1984k,lawrance1995deletion}.

Recall, the leverage score for data point $n$ is the $n$th diagonal entry of the least-squares projection matrix (also known as the Hat matrix). This value can be interpreted as the degree by which the $n$th observation impacts the $n$th fitted value (see \cref{eq:leverage-score}) \citep{belsley2005regression}. Similarly, the $(n, m)$th off-diagonal entry of the Hat matrix, $h_{nm}$, can be interpreted as the degree by which the $n$th observation impacts the $m$th fitted value \citep{gray1984k}. Thus, the entries of the Hat matrix tell us important information about the second-order interaction effects between pairs of data points, information that the additive approximations fail to capture.
\begin{equation}
    h_{nn} = \frac{\partial \hat{y}_n}{\partial y_n}
    \label{eq:leverage-score}
\end{equation}
The leverage score, $h_{nn}$, bounds the off-diagonal elements of the Hat matrix, $h_{nm}$. These off-diagonal elements capture information about pairwise interactions between points (see \cref{prop:leverage-bounds-second-order-interaction-term}). 
When $h_{nm}$ is large, additive approximations become poor approximations. This explains why, in all of the surfaced failure modes of \cref{sec:multi-outlier-examples}, the leverage scores for points in the Most Influential Set are large. 
\begin{proposition}
Let $x_n \in \mathbb{R}^P$ denote the $n$th column of the design matrix $\designmatrix \in \mathbb{R}^{N \times P}$. Let $h_{nm} := x_n^\top (\designmatrix^\top \designmatrix)^{-1} x_m$ denote the entries of the Hat matrix $H := \designmatrix (\designmatrix^\top \designmatrix)^{-1} \designmatrix^\top$.
It follows that
\begin{equation}
    |h_{nm}| \leq \sqrt{h_{nn}h_{mm}}.
\end{equation}
\label{prop:leverage-bounds-second-order-interaction-term}
\end{proposition}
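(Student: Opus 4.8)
The statement is Cauchy--Schwarz in disguise, so the plan is to exhibit the right inner-product structure and then invoke the standard inequality. First I would record that, since $\designmatrix$ is full rank with $N > P$ (as assumed in \cref{sec:currentmethods}), the matrix $\designmatrix^\top\designmatrix$ is symmetric positive definite, and hence so is its inverse $M := (\designmatrix^\top\designmatrix)^{-1}$. Consequently $\langle u, v\rangle_M := u^\top M v$ defines a genuine inner product on $\mathbb{R}^P$, for which $h_{nm} = \langle x_n, x_m\rangle_M$, $h_{nn} = \langle x_n, x_n\rangle_M$, and $h_{mm} = \langle x_m, x_m\rangle_M$.

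The cleanest way to finish is to pass to an ordinary Euclidean inner product. Let $M^{1/2}$ denote the (unique) symmetric positive-definite square root of $M$, and set $a := M^{1/2} x_n$ and $b := M^{1/2} x_m$ in $\mathbb{R}^P$. Then $h_{nm} = x_n^\top M^{1/2} M^{1/2} x_m = a^\top b$, while $h_{nn} = \|a\|_2^2$ and $h_{mm} = \|b\|_2^2$. The usual Cauchy--Schwarz inequality $|a^\top b| \le \|a\|_2\,\|b\|_2$ then yields $|h_{nm}| \le \sqrt{h_{nn}h_{mm}}$, which is the claim. (An equivalent route avoids the square root entirely: writing $H = \designmatrix M \designmatrix^\top$ and using that $H$ is a symmetric idempotent projection, one gets $h_{nm} = e_n^\top H e_m = e_n^\top H^\top H e_m = (He_n)^\top(He_m)$, and Cauchy--Schwarz on the vectors $He_n, He_m \in \mathbb{R}^N$ gives the same bound with $h_{nn} = \|He_n\|_2^2$, $h_{mm} = \|He_m\|_2^2$.)

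There is no real obstacle here; the only substantive ingredient is the positive-definiteness of $(\designmatrix^\top\designmatrix)^{-1}$, which is exactly the full-rank hypothesis already in force, and the existence of a symmetric positive-definite square root (or, in the alternative argument, the fact that the Hat matrix is an orthogonal projection). Everything else is the textbook Cauchy--Schwarz inequality.
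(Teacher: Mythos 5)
Your proposal is correct and follows essentially the same route as the paper's proof: both define $a := (\designmatrix^\top\designmatrix)^{-1/2}x_n$ and $b := (\designmatrix^\top\designmatrix)^{-1/2}x_m$ and apply the Cauchy--Schwarz inequality, using positive definiteness of $(\designmatrix^\top\designmatrix)^{-1}$ to justify the square root. The alternative projection-matrix argument you sketch is a fine variant but not needed.
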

\begin{proof}
    The Cauchy-Schwarz inequality states that for any vectors $a, b$, in an inner product space,\[|\langle a, b \rangle| \leq \|a\|\|b\|.\]
    Let $a := (\designmatrix^\top \designmatrix)^{-1/2} x_n$ and $b := (\designmatrix^\top \designmatrix)^{-1/2} x_m$.
    Notice that the entries of the Hat matrix can be written in terms of our defined vectors, \[h_{nm} = \langle a, b \rangle, h_{nn} = \|a\|^2, h_{mm} = \|b\|^2.\]
    Taking square roots, \[\|a\| = \sqrt{h_{nn}}, \|b\| = \sqrt{h_{mm}}.\]
    We thus conclude that \[|h_{nm}| \leq \sqrt{h_{nn}h_{mm}}.\]
\end{proof}
\subsection{One-outlier example}
\label{sec:one-outlier-example}
\subsubsection{One-outlier failure mode theory}
We saw that both AMIP and Greedy AMIP break down in the one-outlier setting. This is because the influence score of the outlier vanishes as the point approaches infinity in the $x$ and $y$ directions. In \cref{prop:oneoutliertypetwo}, we examine the mathematics behind this phenomenon.
\begin{lemma}
Let $\lambda \in \mathbb{R}$ and $e_p \in \mathbb{R}^P$ be the $p$th standard basis vector. Let $x_n \in \mathbb{R}^P$ denote the $n$th row of the design matrix $\designmatrix \in \mathbb{R}^{N \times P}$, and let $y_n \in \mathbb{R}$ denote the $n$th entry of the response vector $y \in \mathbb{R}^N$. Let $\designmatrix_{-1} \in \mathbb{R}^{N - 1 \times P}$ denote the design matrix with the 1st row deleted, and let $y_{-1} \in \mathbb{R}^{N - 1}$ denote the response vector with the 1st entry deleted. Let $\bm{A}_{-1} = \designmatrix_{-1}^\top \designmatrix_{-1}$ and $b_{-1} = y_{-1}^\top \designmatrix_{-1}$. For any $v \in \mathbb{R}^{P}$ with $\|v\|=1$ and any constant $c > 0$, let $(x_1, y_1) = (\lambda v, c\lambda)$. For $2 \leq n \leq N$, let $(x_n, y_n) \in \mathbb{R}^P \times \mathbb{R}$ be arbitrary points with the constraint that $\mathbf{X}_{-1}$ has rank $P$. The influence score of data point $(x_1, y_1)$ is,
\begin{align}
    \frac{\partial \hat{\theta}_p(\wvec)}{\partial w_1}\Big\vert_{\wvec = 1_N} = \frac{1}{\lambda^2} \left( \frac{e_p^\top \Ai^{-1} v ( c - \Bi \Ai^{-1} v)}{\lambda^{-4} + 2 \lambda^{-2} v^\top \Ai^{-1} v + (v^\top \Ai^{-1} v)^2} \right).
\end{align}

\label{lemma:influence-score-outlier}
\end{lemma}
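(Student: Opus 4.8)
The plan is to substitute the outlier point directly into the closed-form OLS influence-score formula \cref{eqn:influence-function-formula} and then simplify with a rank-one (Sherman--Morrison) update. Writing $M := \designmatrix^\top \designmatrix$, the influence score of the first data point is $e_p^\top M^{-1} x_1\,(y_1 - \hat\theta(\onevec)^\top x_1)$. With $(x_1, y_1) = (\lambda v, c\lambda)$, the leverage-like factor is $e_p^\top M^{-1} x_1 = \lambda\, e_p^\top M^{-1} v$ and the residual factor is $y_1 - \hat\theta(\onevec)^\top x_1 = \lambda\,(c - \hat\theta(\onevec)^\top v)$, so the score equals $\lambda^2\, e_p^\top M^{-1} v\,(c - \hat\theta(\onevec)^\top v)$. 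It remains to express $M^{-1} v$ and $\hat\theta(\onevec)^\top v$ in terms of the leave-one-out quantities $\Ai$ and $\Bi$.

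First I would peel off the outlier's contribution to the normal equations: $M = \Ai + \lambda^2 v v^\top$ and $\designmatrix^\top y = c\lambda^2 v + \Bi^\top$, the latter because $\designmatrix_{-1}^\top y_{-1} = \Bi^\top$. Since $\designmatrix_{-1}$ has rank $P$, the matrix $\Ai$ is invertible, and Sherman--Morrison gives $M^{-1} = \Ai^{-1} - \lambda^2 (1 + \lambda^2 v^\top \Ai^{-1} v)^{-1} \Ai^{-1} v v^\top \Ai^{-1}$. Applying this to $v$ and collecting terms collapses the expression to $M^{-1} v = (1 + \lambda^2 v^\top \Ai^{-1} v)^{-1} \Ai^{-1} v$, hence $e_p^\top M^{-1} v = (1 + \lambda^2 v^\top \Ai^{-1} v)^{-1} e_p^\top \Ai^{-1} v$. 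Using symmetry of $M^{-1}$, $\hat\theta(\onevec)^\top v = (\designmatrix^\top y)^\top M^{-1} v = (c\lambda^2 v^\top \Ai^{-1} v + \Bi \Ai^{-1} v)/(1 + \lambda^2 v^\top \Ai^{-1} v)$.

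Substituting, the residual factor telescopes to $c - \hat\theta(\onevec)^\top v = (c - \Bi \Ai^{-1} v)/(1 + \lambda^2 v^\top \Ai^{-1} v)$, so the influence score equals $\lambda^2\, e_p^\top \Ai^{-1} v\,(c - \Bi \Ai^{-1} v)\,/\,(1 + \lambda^2 v^\top \Ai^{-1} v)^2$. The final step is cosmetic: factor $\lambda^4$ out of the denominator, writing $(1 + \lambda^2 v^\top \Ai^{-1} v)^2 = \lambda^4(\lambda^{-4} + 2\lambda^{-2} v^\top \Ai^{-1} v + (v^\top \Ai^{-1} v)^2)$, which cancels the leading $\lambda^2$ and produces exactly the claimed $1/\lambda^2$ prefactor. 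No step presents a genuine obstacle; the only care needed is the standard row/column bookkeeping for $\Bi$ and noting that the Sherman--Morrison denominator $1 + \lambda^2 v^\top \Ai^{-1} v$ never vanishes, since $\Ai$ is positive definite. The resulting identity holds for every $\lambda \neq 0$ and is precisely the expression that \cref{prop:oneoutliertypetwo} sends to $0$ as $\lambda \to \infty$.
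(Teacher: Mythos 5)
Your proof is correct and follows essentially the same route as the paper's: factor the influence score into the leverage-like and residual terms, apply Sherman--Morrison to $\designmatrix^\top\designmatrix = \Ai + \lambda^2 v v^\top$, and simplify, with only the cosmetic difference that you compute $(\designmatrix^\top\designmatrix)^{-1}v$ once and reuse it for both factors rather than applying Sherman--Morrison separately to each term as the paper does.
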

\begin{proof}
    Recall that, for OLS linear regression and the effect size quantity of interest, $\theta_p$, the formula for the influence score of the $n$th data point is
    \begin{align}
        \frac{\partial \hat{\theta}_p(\wvec)}{\partial w_n}\Big\vert_{\wvec = 1_N} = 
        \underbrace{e_p^\top(\designmatrix^\top \designmatrix)^{-1}x_n}_{\text{leverage-like term}}\underbrace{(y_n - \hat{\theta}^\top x_n)}_{\text{residual term}}.
    \end{align}
    We start by examining the leverage-like term, 
    \begin{equation}
    \begin{aligned}
        e_p^\top \left(\designmatrix^\top \designmatrix\right)^{-1}x_1. 
    \end{aligned}
    \end{equation}
    Using the Sherman-Morrison formula, the leverage-like term is equivalent to
    \begin{equation}
    \begin{aligned}
        e_p^\top (\Ai + x_1 x_1^\top)^{-1} x_1 = e_p^\top \left(\Ai^{-1} - \frac{\Ai^{-1} x_1 x_1^\top \Ai^{-1}}{1 + x_1^\top \Ai^{-1} x_1}\right) x_1.
    \end{aligned}
    \end{equation}
    Substituting $x_1 = \lambda v$ and $y_1 = c \lambda$, this term becomes 
    \begin{equation}
    \begin{aligned}
        \frac{\lambda e_p^\top \Ai^{-1}v}{1 + \lambda^2 v^\top \Ai^{-1}v},
    \end{aligned}
    \end{equation}
    which tends to zero as $\lambda \to \infty$.

We next look at the residual term. 

The fitted value for $x_1$ is
\begin{equation}
\begin{aligned}
    \hat{\theta}^\top x_1 &= y^\top \designmatrix (\designmatrix^\top \designmatrix)^{-1} x_1.
\end{aligned}
\end{equation}
Using the Sherman-Morrison formula, the fitted value can be written as
\begin{equation}
\begin{aligned}
    y^\top \designmatrix \Ai^{-1} x_1 - \frac{y^\top \designmatrix \Ai^{-1} x_1 x_1^\top \Ai^{-1} x_1}{1 + x_1^\top \Ai^{-1}x_1}.
\end{aligned}
\end{equation}
Through algebraic simplification, the above expression can be written under one fraction,
\begin{equation}
\begin{aligned}
    y^\top \designmatrix \Ai^{-1} x_1 - \frac{y^\top \designmatrix \Ai^{-1} x_1 x_1^\top \Ai^{-1} x_1}{1 + x_1^\top \Ai^{-1}x_1}
    &= y^\top \designmatrix \Ai^{-1} x_1 \left( 1 - \frac{x_1^\top \Ai^{-1}x_1}{1 + x_1^\top \Ai^{-1} x_1}\right) \\
    &= \frac{y^\top \designmatrix \Ai^{-1} x_1}{1 + x_1^\top \Ai^{-1} x_1} \\
    &= \frac{(y_1x_1^\top + \Bi) \Ai^{-1} x_1}{1 + x_1^\top \Ai^{-1} x_1}.
\end{aligned}
\end{equation}
Substituting $x_1 = \lambda v \in \mathbb{R}^P$, $y_1 = c \lambda$, we get 
\begin{equation}
\begin{aligned}
    \frac{(c \lambda^2 v^\top + \Bi) \Ai^{-1} \lambda v}{1 + \lambda^2 v^\top \Ai^{-1} v} 
    = \frac{c \lambda^3 v^\top \Ai^{-1} v + \lambda \Bi \Ai^{-1} v}{1 + \lambda^2 v^\top \Ai^{-1} v}.
    \label{eqn:outlier-fitted-value}
\end{aligned}
\end{equation}
Finally, subtracting the fitted value from $y_1$, the residual is
\begin{equation}
\begin{aligned}
    y_1 - \hat{\theta}^\top x_1 
    = \frac{c \lambda - \lambda \Bi \Ai^{-1} v}{1 + \lambda^2 v^\top \Ai^{-1} v}.
    \label{eqn:outlier-residual-term}
\end{aligned}
\end{equation}
Taken together, the influence score of $(x_1, y_1)$ is
\begin{equation}
\begin{aligned}
    e_p^\top (\designmatrix^\top \designmatrix)^{-1}x_1 (y_1 - \hat{\theta}^\top x_1) 
    &= \frac{\lambda^2 (c e_p^\top \Ai^{-1} v - e_p^\top \Ai^{-1} v \Bi \Ai^{-1} v)}{1 + 2 \lambda^2 v^\top \Ai^{-1} v +   \lambda^4 (v^\top \Ai^{-1} v)^2} \\
    &= \frac{1}{\lambda^2} \left( \frac{e_p^\top \Ai^{-1} v ( c - \Bi \Ai^{-1} v)}{\lambda^{-4} + 2 \lambda^{-2} v^\top \Ai^{-1} v + (v^\top \Ai^{-1} v)^2} \right).
    \label{eqn:influence-score-outlier-point}
\end{aligned}
\end{equation}
\end{proof}
We saw that AMIP failed both with and without re-run in the one-outlier setting. 

In the one-outlier case in \cref{sec:one-outlier-examples}, we saw that, for sufficiently large $\lambda$, the influence score for the outlier becomes smaller than that of a non-outlier. In \cref{prop:oneoutliertypetwo}, we explain this phenomenon more formally.
\oneoutliertypetwo*
\begin{proof}
Let $e_p \in \mathbb{R}^P$ denote the $p$th standard basis vector. Let $\bm{A}_{-1} = \designmatrix_{-1}^\top \designmatrix_{-1}$ and $b_{-1} = y_{-1}^\top \designmatrix_{-1}$, where $y_{-1} \in \mathbb{R}^{N - 1}$ denotes the response vector with the $n$th entry deleted.

We begin by examining the influence score of $(x_1, y_1)$.

From \cref{lemma:influence-score-outlier} \cref{eqn:influence-score-outlier-point}, we saw that the influence score of data point $(x_1, y_1)$ is
\begin{equation}
\begin{aligned}
    \frac{\partial \hat{\theta}_p(\wvec)}{\partial w_1}\Big\vert_{\wvec = 1_N} = \frac{1}{\lambda^2} \left( \frac{e_p^\top \Ai^{-1} v ( c - \Bi \Ai^{-1} v)}{\lambda^{-4} + 2 \lambda^{-2} v^\top \Ai^{-1} v + (v^\top \Ai^{-1} v)^2} \right).
    \label{eqn:influence-function-outlier}
\end{aligned}
\end{equation}
Taking a limit as $\lambda \to \infty$, this expression goes to zero at rate $O(\frac{1}{\lambda^2})$, 
\begin{equation}
\begin{aligned}
    \lim_{\lambda \to \infty} \frac{1}{\lambda^2} \left( \frac{e_p^\top \Ai^{-1} v ( c - \Bi \Ai^{-1} v)}{\lambda^{-4} + 2 \lambda^{-2} v^\top \Ai^{-1} v + (v^\top \Ai^{-1} v)^2} \right) 
    = 0.
\end{aligned}
\end{equation}
We next examine the influence score of $(x_2, y_2)$.
We start by examining the leverage-like term,
\begin{equation}
\begin{aligned}
    e_p^\top (\designmatrix^\top \designmatrix)^{-1}x_2.
\end{aligned}
\end{equation}
Using the Sherman-Morrison formula, this is equivalent to
\begin{equation}
\begin{aligned}
    e_p^\top (\Ai + x_1 x_1^\top)^{-1} x_2 = e_p^\top (\Ai^{-1} - \frac{\Ai^{-1} x_1 x_1^\top \Ai^{-1}}{1 + x_1^\top \Ai^{-1} x_1}) x_2.
\end{aligned}
\end{equation}
Substituting $x_1 = \lambda v$ and combining fractions, the leverage-like term becomes 
\begin{equation}
\begin{aligned}
       e_p^\top (\designmatrix^\top \designmatrix)^{-1}x_2 =\frac{e_p^\top \Ai^{-1} x_2 + \lambda^2 (v^\top \Ai^{-1} v e_p^\top \Ai^{-1} x_2 - e_p^\top \Ai^{-1} v v^\top \Ai^{-1} x_2)}{1 + \lambda^2 v^\top \Ai^{-1} v}. 
       \label{eqn:leverageliketerm-x2y2}
\end{aligned}
\end{equation}

We next look at the residual term.

Using the formula for the OLS solution, the residual for $(x_2, y_2)$ is 
\begin{equation}
\begin{aligned}
    y_2 - \hat{\theta}^\top x_2 &= y_2 - y^\top \designmatrix (\designmatrix^\top \designmatrix)^{-1} x_2.
\end{aligned}
\end{equation}

Using the Sherman-Morrison formula, this can be written as
\begin{equation}
\begin{aligned}
  y_2 - \hat{\theta}^\top x_2 = y_2 - \left( y^\top \designmatrix \Ai^{-1} x_2 - \frac{y^\top \designmatrix \Ai^{-1} x_1 x_1^\top \Ai^{-1} x_2}{1 + x_1^\top \Ai^{-1}x_1} \right).
\end{aligned}
\end{equation}
Substituting $x_2 = \lambda v$ and $y = \lambda c$ and combining fractions, we get  
\begin{equation}
\begin{aligned}
    y_2 - \hat{\theta}^\top x_2 & = y_2 - \left((\lambda^2 c v^\top + \Bi) \Ai^{-1} x_2 - \frac{\lambda^2 (\lambda^2 c v^\top + \Bi) \Ai^{-1} v v^\top \Ai^{-1} x_2}{1 + \lambda^2 v^\top \Ai^{-1} v} \right) \\
    &= y_2 - \left(\lambda^2 c v^\top \Ai^{-1} x_2 + \Bi \Ai^{-1} x_2 - \frac{\lambda^4 c v^\top \Ai^{-1} v v^\top \Ai^{-1} x_2 + \lambda^2 \Bi \Ai^{-1} v v^\top \Ai^{-1} x_2}{1 + \lambda^2 v^\top \Ai^{-1} v} \right).
\end{aligned}
\end{equation}
Finally, combining fractions, we get that the residual is
\begin{equation}
\begin{aligned}
    y_2 - \hat{\theta}^\top x_2 = \frac{\lambda^2 (y_2 v^\top \Ai^{-1} v - c v^\top \Ai^{-1} x_2 - \Bi \Ai^{-1} x_2 v^\top \Ai^{-1} v + \Bi\Ai^{-1}v v^\top \Ai^{-1} x_2) + y_2 - \Bi \Ai^{-1} x_2}{1 + \lambda^2 v^\top \Ai^{-1} v}.
    \label{eqn:residual-x2y2}
\end{aligned}
\end{equation}
Taking \cref{eqn:leverageliketerm-x2y2,eqn:residual-x2y2} together, the influence score of $(x_2, y_2)$ is
\begin{equation}
\begin{aligned}
    e_p^\top (\designmatrix^\top \designmatrix)^{-1}x_2 (y_2 - \hat{\theta}^\top x_2) 
    &= \frac{\lambda^4 s t + \lambda^2 s (y_2 - \Bi \Ai^{-1} x_2) + \lambda^2 t e_p^\top \Ai^{-1} x_2 + e_p^\top \Ai^{-1} x_2 (y_2 - \Bi \Ai^{-1} x_2)}{\lambda^4 (v^\top \Ai^{-1} v)^2 + 2 \lambda^2 v^\top \Ai^{-1} v + 1},
    \label{eqn:influence-score-non-outlier-point}
\end{aligned}
\end{equation}
where 
\begin{align}
    s &= (v^\top \Ai^{-1} v e_p^\top \Ai^{-1} x_2 - e_p^\top \Ai^{-1} v v^\top \Ai^{-1} x_2) \\
    t &= (y_2 v^\top \Ai^{-1} v - c 
 v^\top \Ai^{-1} x_2 - \Bi \Ai^{-1} x_2 v^\top \Ai^{-1} v + \Bi\Ai^{-1}v v^\top \Ai^{-1} x_2)
\end{align}

Taking a limit in \cref{eqn:influence-score-non-outlier-point} as $\lambda \to \infty$,
\begin{equation}
\begin{aligned}
    \lim_{\lambda \to \infty} e_p^\top (\designmatrix^\top \designmatrix)^{-1}x_2 (y_2 - \hat{\theta}^\top x_2) \\
    = \lim_{\lambda \to \infty} \frac{\lambda^4 s t + \lambda^2 s (y_2 - \Bi \Ai^{-1} x_2) + \lambda^2 t e_p^\top \Ai^{-1} x_2 + e_p^\top \Ai^{-1} x_2 (y_2 - \Bi \Ai^{-1} x_2)}{\lambda^4 (v^\top \Ai^{-1} v)^2 + 2 \lambda^2 v^\top \Ai^{-1} v + 1} \\
    &= \lim_{\lambda \to \infty} \frac{\lambda^4 s t}{\lambda^4 (v^\top \Ai^{-1} v)^2} \\
    &= \frac{s t}{(v^\top \Ai^{-1} v)^2}.
    \label{eqn:limiting-constant}
\end{aligned}
\end{equation}

\end{proof}
\subsubsection{Generality of conditions in the one-outlier theory: a simulation study.}
\label{sec:sim-study-one-outlier-theory}
To assess the strictness of the conditions posed in \cref{prop:oneoutliertypetwo}, we run a simulation study following the setup and assumptions outlined in \cref{prop:oneoutliertypetwo}. 


Next, for $N = 1{,}000$, we generate $5{,}000$ datasets for each of dimensions, $P = 3, 6, \text{ and } 9$. For each data set, we choose 1 data point uniformly at random from the inlier samples to be $(x_2, y_2)$, a random integer between 1 and $P$ (inclusive) for the value of $p$ in $e_p$, and a random unit vector for $v \in \mathbb{R}^P$. Additionally, we take $\lambda$ to be large ($10^{10}$) and choose $(x_1, y_1)$ to be the point at the first index. We then compute the values of $s$ and $t$.

In $5{,}000$ simulations for each dimension, we observe that neither $s$ nor $t$ are ever zero. This provides further empirical evidence to suggest that  \cref{prop:oneoutliertypetwo} holds more broadly than just the specific toy example provided in \cref{cor:concrete-example-for-theory}.

\subsubsection{One-outlier example, empirical findings supplementals.}
\label{sec:sim-study-one-outlier-tables}
\cref{table:extreme-outlier-table} and \cref{table:extreme-outlier-table-red-cross} display empirical findings for the data generating process described in \cref{sec:one-outlier-examples}. The tables present empirical evidence showing that a sufficiently far outlier will have vanishingly low influence score (see \cref{prop:oneoutliertypetwo}). As the black-dot point (the outlier) moves far from the group of red-cross points (the central points) in both the x and y directions, both the leverage-like term and the residual term of the influence score approach zero at rate \(O(\frac{1}{\lambda})\) (see numerical results in columns 3 and 4 of \cref{table:extreme-outlier-table}). When observing the behavior of the red-cross point with the largest influence score in \cref{table:extreme-outlier-table-red-cross}, we see that the leverage-like term approaches zero while the residual term stays relatively constant (within the same order of magnitude). Thus, for sufficiently large values of \((x_i, y_i)\), the influence score for the black-dot point becomes smaller than that of a red-cross point (see the highlighted values in \cref{table:extreme-outlier-table} and \cref{table:extreme-outlier-table-red-cross}). For the \((x_i, y_i)\) values with highlighted influence scores (see \cref{table:extreme-outlier-table}), both AMIP and Greedy AMIP fail (both with and without re-run). This occurs at \(x = y = 10^{6}\) for the data generating process described in \cref{sec:one-outlier-examples}.

\FloatBarrier
\begin{table}
\caption{This table shows the influence and One-Exact scores for the black-dot point at various values of \((x_i, y_i)\) for the data generating process described in \cref{sec:one-outlier-examples} (see plot for the setting where \((x_i, y_i) =(10^{6}, 10^{6})\) in \cref{fig:plots-of-examples} (left)). In order to obtain the influence with respect to \(\theta_1\), we let \(e_1 = (0, 1)\), the standard basis vector corresponding to the x term in our linear regression setup. The influence score, \(e_1^\top(\designmatrix^\top\designmatrix)^{-1}x_i(y_i - \hat{\theta}x_i)\), is a product between the quantity in column 3 (which we call the leverage-like term, see \cref{eqn:influence-function-formula}) and column 4 (the residuals). The One-Exact score is the change in effect size that results from dropping the single data point at \((x_i, y_i)\) and refitting OLS. The influence scores highlighted in yellow are those that are smaller than the influence score of a red-cross point, leading AMIP and Greedy AMIP to misidentify the Most Influential Set of size 1, resulting in a failure with re-run.}
\vskip 0.15in
\begin{center}
\begin{small}
\begin{sc}
\begin{tabular}{lcccccr}
\toprule
Black Dot&(outlier): \\
\midrule
\(x_i\) & \(y_i\) & \(e_1^\top(\designmatrix^\top\designmatrix)^{-1}x_i \) & \((y_i - \hat{\theta}x_i)\) & Influence Score & One-Exact Score \\
\midrule
1e1 & 1e1 & 9.36e-3 & 18.390 & 1.72e-1 & 1.90e-1 \\
1e2 & 1e2 & 9.11e-3 & 17.981 & 1.64e-1 & 1.85 \\
1e4 & 1e4 & 1.00e-5 & 1.97e-1 & 1.97e-5 & 2.03 \\
1e6 & 1e6 & 1.00e-6 & 1.97e-3 & \colorbox{yellow}{1.97e-9} & 2.03 \\
1e8 & 1e8 & 1.00e-8 & 2.00e-5 & \colorbox{yellow}{1.97e-13} & 2.18 \\
1e10 & 1e10 & 1.00e-10 & 1.00e-5 & \colorbox{yellow}{9.54e-16} & 2.03 \\
\bottomrule
\end{tabular}
\end{sc}
\end{small}
\end{center}
\vskip -0.1in
\label{table:extreme-outlier-table}
\end{table}
\FloatBarrier

\FloatBarrier
\begin{table}
\caption{This table shows the influence and One-Exact scores for the red-cross point (a central point) with the largest influence score when the black-dot point (see \cref{fig:plots-of-examples} (left)) is placed at the \((x_i, y_i)\) position shown in the corresponding row of \cref{table:extreme-outlier-table}. In order to obtain the influence with respect to \(\theta_1\), we let \(e_1 = (0, 1)\), the standard basis vector corresponding to the \(x\) term in our linear regression setup. The Influence Score, \(e_1^\top(\designmatrix^\top\designmatrix)^{-1}x_j(y_j - \hat{\theta}x_j)\), is a product between the quantity in column 3 (which we call the leverage-like term, see \cref{eqn:influence-function-formula}) and column 4 (the residuals). The One-Exact score is the change in effect size that results from dropping the single data point at \((x_j, y_j)\) and refitting OLS. The influence scores highlighted in yellow are those that are larger than the influence score of the black-dot point, leading AMIP and Greedy AMIP to misidentify the Most Influential Set of size 1, resulting in a failure with re-run.}
\label{sample-table}
\vskip 0.15in
\begin{center}
\begin{small}
\begin{sc}
\begin{tabular}{lcccccr}
\toprule
Red Cross & (central) \\
\midrule
\(x_j\) & \(y_j\) & \(e_1^\top(\designmatrix^\top\designmatrix)^{-1}x_j \) & \((y_j - \hat{\theta}x_j)\) & Influence Score & One-Exact Score \\
\midrule
2.13 & -0.09 & 2.02e-3 & 1.66 & 3.38e-3 & 3.40e-3 \\
-0.72 & -2.05 & 7.10e-5 & -1.57 & 1.12e-4 & 1.12e-4 \\
2.70 & -4.85 & 7.25e-5 & -7.66 & 5.55e-7 & 5.55e-7 \\
2.70 & -4.85 & 7.25e-8 & -7.65 & \colorbox{yellow}{7.63e-9} & 7.64e-9 \\
2.70 & -4.85 & 9.97e-10 & -7.65 & \colorbox{yellow}{7.65e-11} & 7.66e-11 \\
2.70 & -4.85 & 1.00e-14 & -7.65 & \colorbox{yellow}{7.65e-13} & 7.66e-13 \\
\bottomrule
\end{tabular}
\end{sc}
\end{small}
\end{center}
\vskip -0.1in
\label{table:extreme-outlier-table-red-cross}
\end{table}
\FloatBarrier

\subsection{Multi-outlier examples}
\label{sec:multi-outlier-failure-mode-theory}
\subsubsection{Multi-outlier failure mode theory}
Additive approximations may be inaccurate, even for approximating the removal of two data points. In \cref{prop:two-points-going-to-infinity-2}, we show mathematically that when 
a pair of points, off by a constant term, go together towards infinity, the Additive-One Exact approximation to dropping the pair tends towards zero, regardless of what the true change in effect size approaches.
\twopointstoinfinity*
\begin{proof}
    Let $\hat{\theta}_{-1}$ denote the OLS solution fit to the data after dropping point $(x_1, y_1)$, and let $\hat{\theta}_{\{-1, -2\}}$ denote the OLS solution fit to the data after dropping the pair, $(x_1, y_1), (x_2, y_2)$. Let $\text{Add-1Exact}(1, 2)$ denote the Additive One-Exact approximation to the change in effect size after dropping $(x_1, y_1), (x_2, y_2)$. Finally, let $S_1 = \sum_{k \neq 1, 2 }^N x_n^2$ and $S_2 = \sum_{n \neq 1, 2 }^N y_n x_n.$
    
    $\text{Add-1Exact}(1, 2)$ is expressed as
    \begin{equation}
    \begin{aligned}
        \text{Add-1Exact}(1, 2) 
        &= (\hat{\theta} - \hat{\theta}_{-1}) + (\hat{\theta} - \hat{\theta}_{-2}) \\
        &= \frac{\lambda^2 (S_1 - S_2) - \lambda^3 c}{(S_1 + \lambda^2)(S_1+2\lambda^2)} + \frac{\lambda^2 (S_1 - S_2) + \lambda^3 c}{(S_1 + \lambda^2)(S_1 + 2\lambda^2)} \\
        &= \frac{2 \lambda^2(S_1 - S_2)}{(S_1 + \lambda^2)(S_1 + 2\lambda^2).}
        \label{eqn:add-1exact-approx}
    \end{aligned}
    \end{equation}
    As $\lambda \to \infty$, \cref{eqn:add-1exact-approx} tends to zero.
    \begin{equation}
    \begin{aligned}
        \lim_{\lambda \to \infty} \text{Add-1Exact}(1, 2) 
        &= \lim_{\lambda \to \infty} \frac{2 (S_1 - S_2) \lambda^2 }{2 \lambda^4 +  3 S_1 \lambda^2 + S_1^2} \\
        &= 0.
        \label{eq:add-1exact-limiting-expression}
    \end{aligned}
    \end{equation}
    In contrast, the expression for the true change in effect size from dropping the two points is expressed as
    \begin{equation}\label{eqn:true-effect-dropping-two-datapoints}
    \begin{aligned}
    \hat{\theta} - \hat{\theta}_{\{-1, -2\}}
    &= \frac{4(S_1 - S_2) \lambda^4 +  2(S_1 - S_2) S_1 \lambda^2}{(S_1 + 2 \lambda^2)(S_1^2 + 2 S_1 \lambda^2)} \\
    &= \frac{4(S_1 - S_2) \lambda^4}{4S_1\lambda^4} + O(\frac{1}{\lambda^2}).
    \end{aligned}
    \end{equation}
    Taking a limit in \cref{eqn:true-effect-dropping-two-datapoints} as $\lambda \to \infty$,
    \begin{equation}
    \begin{aligned}
        \lim_{\lambda \to \infty} (\hat{\theta} - \hat{\theta}_{-1, -2}) 
        &= \lim_{\lambda \to \infty} \frac{4(S_1 - S_2) \lambda^4}{4S_1\lambda^4} + O(\frac{1}{\lambda^2}) \\
        &= 1 - \frac{S_2}{S_1} \\
        &= 1 - \frac{\sum_{n \neq 1, 2 }^N y_n x_n}{\sum_{k \neq 1, 2 }^N x_n^2}.
        \label{eq:limiting-expression-exact-expression-2-point-removal}
    \end{aligned}
    \end{equation}

\end{proof}

\subsubsection{Additional multi-outlier examples}
\label{sec:multi-outlier-supplementals}

\textbf{Adversarial Example.} \citet{moitra2022provably} presents an adversarially constructed example in which there exists a small fraction of points that can be dropped such that the covariance matrix becomes singular. This leads to a failure mode of approximation algorithms. In \cref{sec:fm}, we attempt to alter this adversarial setup into one that might arise in natural data settings with no adversary (see \cref{fig:plots-of-examples} (right)). We see that even modest levels of instability in the covariance matrix can lead to failure modes in the approximation methods.

To visualize the example presented in \citet{moitra2022provably}, we generate the red crosses so as to have a singular covariance matrix (see \cref{fig:poor-conditioning-mr22}). In particular, we generate the 1,000 red crosses with $x_n = 0$, $y_n = \epsilon_n$, and $\epsilon_n \iidsim \mathcal{N}(0,1)$. We draw the 10 black dots as $x_n \iidsim \mathcal{N}(-1,0.01)$, $y_n = x_n$. When we consider both black dots and red crosses together as a single dataset, there is no poor conditioning. However, when we drop the red population, a pathological change occurs in the covariance matrix; it becomes singular. The OLS-estimated slope on the full dataset is about 1.00; dropping the black dots ($1\%$ of the data) yields a slope of exactly 0. Note, although the removal of the black-dot points do not induce a sign change in this example, going from a positive signed coefficient to 0 still constitutes a conceivable conclusion change in a data analysis.

\FloatBarrier
\begin{figure}[htp]
    \centering
    \includegraphics[width = 0.6\columnwidth, trim={0cm 0cm 0cm 0cm}, clip]{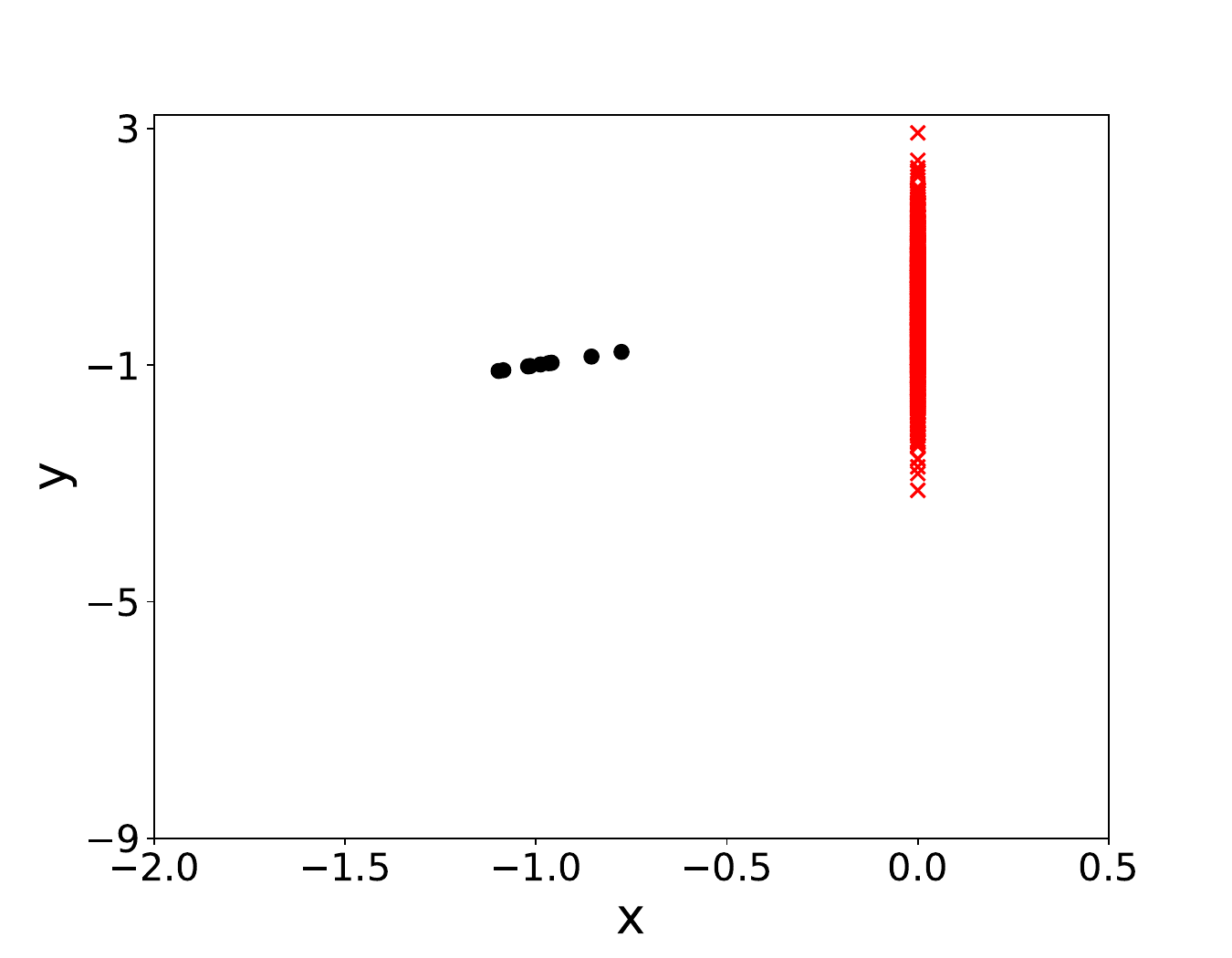} 
    \caption{Example of poor conditioning presented in Section 5.1 of \citet{moitra2022provably}}
    \label{fig:poor-conditioning-mr22}
\end{figure}
\FloatBarrier

\textbf{Greedy AMIP Failure Example.}
In the following example, we illustrate a case in which Greedy AMIP fails (See \cref{fig:greedy-amip-fails}). In particular, when there is one black dot left to remove, Greedy AMIP is unable to identify the black dot as the point to remove. This is because the residual of the last remaining black dot becomes vanishingly small when the second to last black dot is removed in the previous iteration. 

In this example, we generate the 1,000 red crosses with $x_n = 0$, $y_n = \epsilon_n$, and $\epsilon_n \iidsim \mathcal{N}(0,1)$. We draw the 10 black dots as $x_n \iidsim \mathcal{N}(-1, 0.01)$, $y_n = -5 x_n - 10$. The OLS-estimated slope on the full dataset is about 4.94; dropping the black dots ($1\%$ of the data) yields a slope of about 0.

In this example, Greedy One-Exact succeeds. For the mathematical programs algorithms, NetApprox succeeds while FH-Gurobi fails.

\begin{figure}[htp]
    \centering
    \includegraphics[width = 0.6\columnwidth, trim={0cm 0cm 0cm 0cm}, clip]{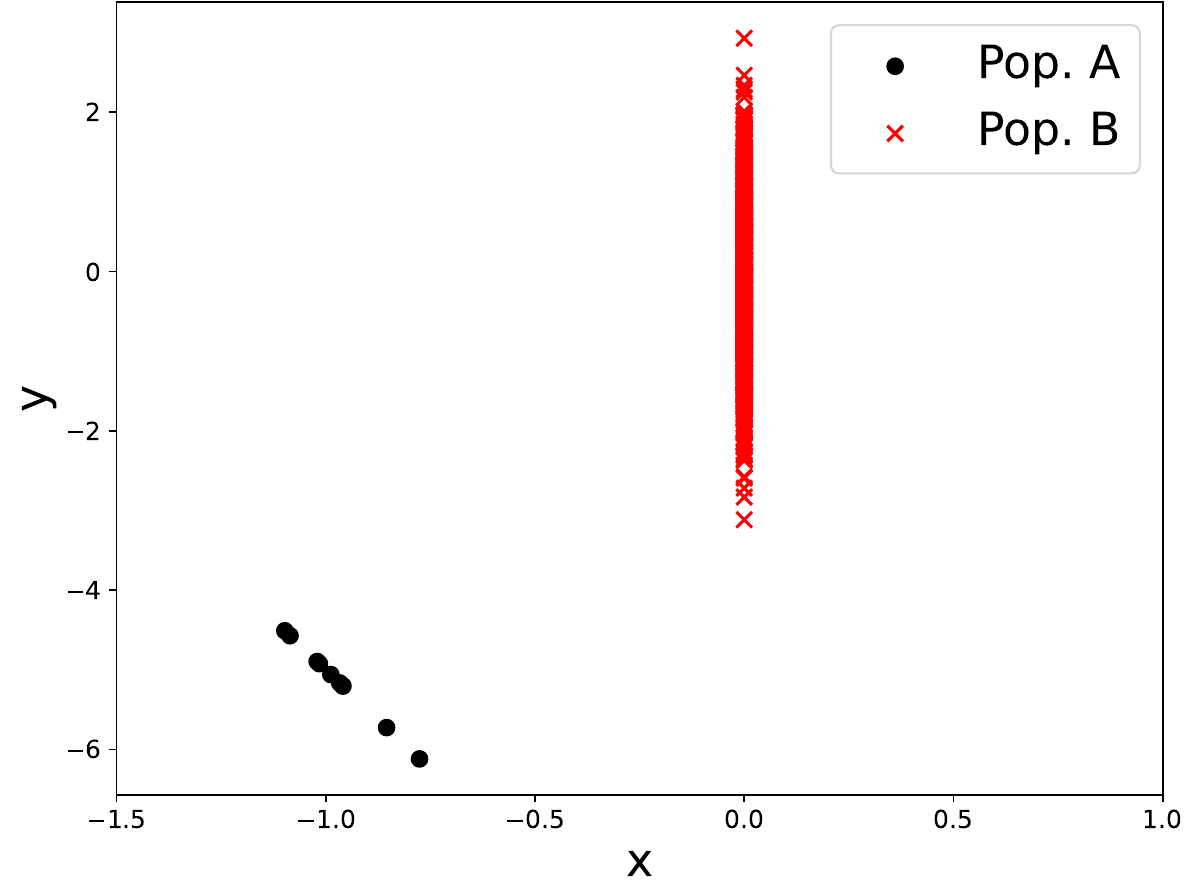} 
    \caption{This is an example where Greedy AMIP fails but Greedy One-Exact succeeds.}
    \label{fig:greedy-amip-fails}
\end{figure}

\textbf{Greedy AMIP and Greedy One-Exact Failure Example.}
In the next example (See \cref{fig:both-greedy-fail}), by clustering $k$ outliers tightly into a small clump, we can construct an instance where both greedy AMIP and 1sN fail to identify the $k$ outlier cluster. This repeated k points centered around one clump (where $k$ is large) produces an example where the residuals are vanishingly small in the outlier cluster, while the leverage can only be as big as 1/k. Hence, the One-Exact scores of certain points in the red inlier population (population B) will be larger. In this instance, if we computed the One-Exact score for every subset of size k, however, we would be able to correctly identify population A as the Most Influential Set.

For the mathematical programs algorithms, NetApprox succeeds while FH-Gurobi fails.

In this example, we generate the 1,000 red crosses with $x_n = 0$, $y_n = \epsilon_n$, and $\epsilon_n \iidsim \mathcal{N}(0,1)$. We draw the 10 black dots as $x_n \iidsim \mathcal{N}(-1,10^{-7})$, $y_n = -5 x_n - 10$. The OLS-estimated slope on the full dataset is about 4.94; dropping the black dots ($1\%$ of the data) yields a slope of about 0.

\begin{figure}[htp]
    \centering
    \includegraphics[width = 0.6\columnwidth, trim={0cm 0cm 0cm 0cm}, clip]{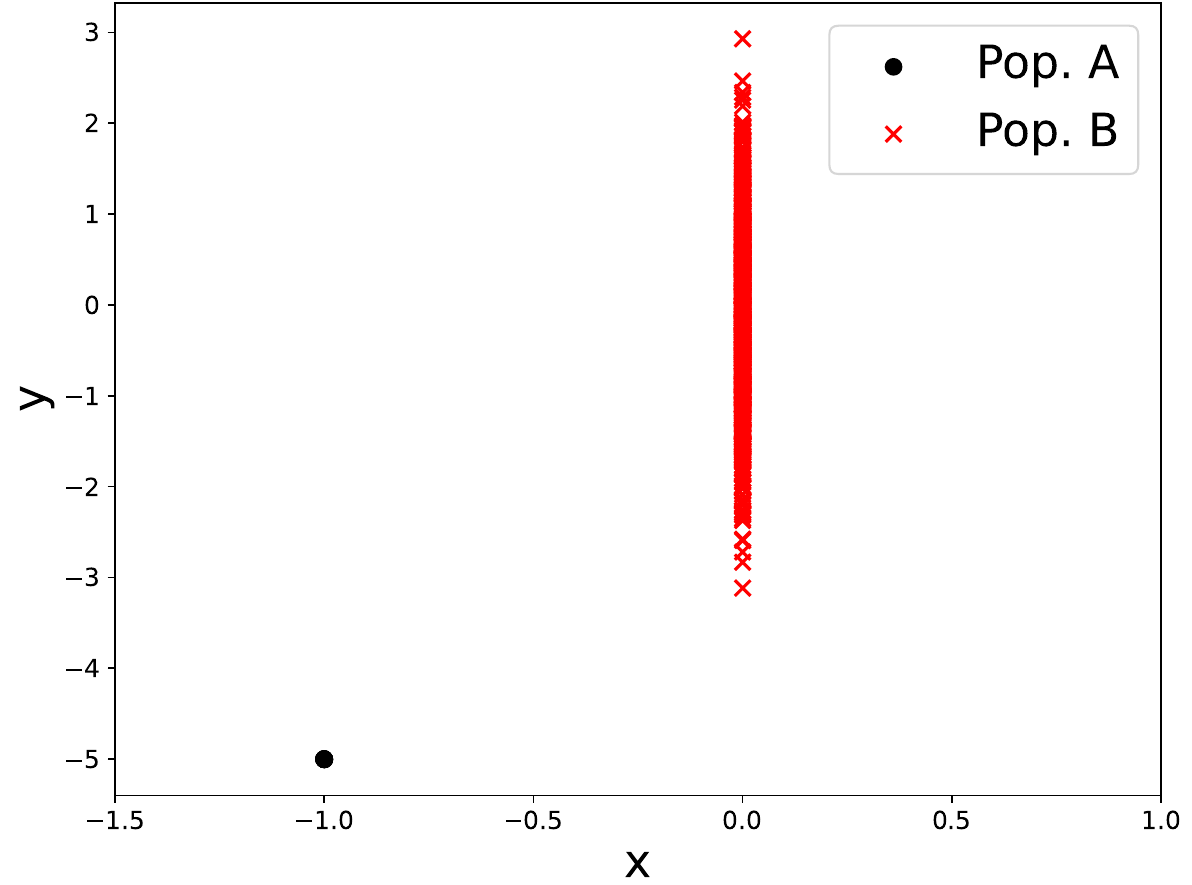} 
    \caption{This is an example where Greedy AMIP and Greedy One-Exact both fail.}
    \label{fig:both-greedy-fail}
\end{figure}
Observe from \cref{eqn:amip-approx-least-squares} that the error arises from a failure to correctly reweight the inverse Hessian term by the dropped subset, $S$. While AMIP disregards this reweighting entirely, Additive One-Exact decreases the error by reweighting the Hessian on an individual point basis (See \cref{eqn:add-1sN-approx-least-squares}). 
\clearpage 

\end{document}